\newcommand{\dP}{\mathrm{P}}
\newcommand{\dQ}{\mathrm{Q}}
\newcommand{\bPP}[1]{{\dP_{#1}}}
\newcommand{\bQQ}[1]{{\dQ_{#1}}}
\newcommand{\bPr}[1]{{\mathbb{P}}\left(#1\right)}
\newcommand{\bP}[2]{\mathrm{P}_{#1}\left({#2}\right)}
\newcommand{\bQ}[2]{\mathrm{Q}_{#1}\left({#2}\right)}
\newcommand{\bEE}[1]{{\mathbb{E}}\left[{#1}\right]}
\newcommand{\cA}{{\mathcal A}}
\newcommand{\cB}{{\mathcal B}}
\newcommand{\cC}{{\mathcal C}}
\newcommand{\cD}{{\mathcal D}}
\newcommand{\tF}{{\tilde F}}
\newcommand{\cK}{{\mathcal K}}
\newcommand{\cM}{{\mathcal M}}
\newcommand{\mN}{{\mathbb N}}
\newcommand{\cP}{{\mathcal P}}
\newcommand{\cT}{{\mathcal T}}
\newcommand{\cU}{{\mathcal U}}
\newcommand{\tU}{{\tilde{U}}}
\newcommand{\cV}{{\mathcal V}}
\newcommand{\tV}{{\tilde{V}}}
\newcommand{\tX}{\tilde{X}}
\newcommand{\cX}{{\mathcal X}}
\newcommand{\tY}{\tilde{Y}}
\newcommand{\cY}{{\mathcal Y}}
\newcommand{\cZ}{{\mathcal Z}}
\newcommand{\tZ}{{\tilde Z}}
\newcommand{\ep}{\varepsilon}
\newcommand{\indicator}{{\mathds{1}}}
\newcommand{\mc}{-\!\!\!\!\circ\!\!\!\!-}
\newtheorem{theorem}{Theorem}
\newtheorem{proposition}[theorem]{Proposition}
\newtheorem{corollary}[theorem]{Corollary}
\newtheorem*{corollary*}{Corollary}
\newtheorem*{lemma*}{Lemmas}
\theoremstyle{remark}
\newtheorem*{remark*}{Remark}
\newtheorem*{remarks*}{Remarks}
\theoremstyle{definition}
\newtheorem{remark}{Remark}
\newcommand{\textchange}[1]{#1}
\begin{document}

\setlength\textfloatsep{0pt}

\title{Strong Converse using Change of Measure Arguments\thanks{A
    preliminary version of this work was presented at the IEEE
    International Symposium on Information Theory, Vail, USA, 2018.}}  

\author{ \IEEEauthorblockN{Himanshu Tyagi$^\dag$} and
  \IEEEauthorblockN{Shun Watanabe$^\ddag$} }

\maketitle

{\renewcommand{\thefootnote}{}\footnotetext{
\noindent$^\dag$Department of Electrical Communication Engineering,
Indian Institute of Science, Bangalore 560012, India.  Email:
htyagi@ece.iisc.ernet.in.

\noindent$^\ddag$Department of Computer and Information Sciences,
Tokyo University of Agriculture and Technology, Tokyo 184-8588, Japan.
Email: shunwata@cc.tuat.ac.jp.  }}

\maketitle

\renewcommand{\thefootnote}{\arabic{footnote}}
\setcounter{footnote}{0}

\begin{abstract}
The strong converse for a coding theorem shows that the optimal
asymptotic rate possible with vanishing error cannot be improved by
allowing a fixed error. Building on a method introduced by Gu and
Effros for centralized coding problems, we develop a general and
simple recipe for proving strong converse that is applicable for 
distributed problems as well. Heuristically, our proof of strong
converse mimics the standard steps for proving a weak converse, except
that we apply those steps to a modified distribution obtained by
conditioning the original distribution on the event that no error
occurs. A key component of our recipe is the replacement of the hard
Markov constraints implied by the distributed nature of the problem
with a soft information cost using a variational formula introduced by
Oohama. We illustrate our method by providing a short proof of the
strong converse for the Wyner-Ziv problem and strong converse theorems
for interactive function computation, common randomness and secret key
agreement, and the wiretap channel; \textchange{the latter three
  strong converse problems were open prior to this work.} 
\end{abstract}

\section{Introduction}
A coding theorem in information theory characterizes the optimal rate
such that there exists a code of that rate for the problem
studied. Often, the first version of such theorems are proved assuming
a vanishing probability of error criterion. This criterion facilitates
a simple proof relying on chain rules and Fano's inequality. The
strong converse holds for a coding theorem if the optimal rate claimed
by the theorem cannot be improved even if a fixed error is
allowed. The first strong converse was shown for the point-to-point
channel coding theorem and source coding theorem by Wolfowitz
(see~\cite{Wol61}). A general method for proving strong converse for
coding theorems in multiterminal information theory was introduced
in~\cite{AhlGacKor76}. This method uses a strong converse
for the image-size characterization problem, which is in turn shown using the blowing-up lemma; see~\cite{CsiKor11} for a comprehensive treatment. The approach
based on blowing-up lemma entails, in essence, changing the code to a
list-code with a list-size of vanishing rate. \textchange{Related
  recent works that
  involve a change in the underlying code, too, but use
modern tools from functional inequalities and measure concentration
literature
include\footnote{For another use of the Gaussian-Poincar\'e
  inequality in information theory, see~\cite{PolVer14}.}  
\cite{FonTan17b}
 and\footnote{The approach in \cite{LiuHanVer17} was extended in
   \cite{Liu18} to derive a dispersion converse bound for 
 the Wyner-Ahlswede-K\"orner network.} \cite{LiuHanVer17}.} 

In this work, we present a simple method for proving strong converses
for multiterminal problems that uses very similar steps as the weak
converse proofs.  Our method consists of two steps, both building on techniques available in the literature. The first step is a {\em change of measure
  argument}\footnote{Our argument differs from the change of measure
  argument used to prove sphere-packing bounds
  (\textchange{cf.~\cite{CsiKor11, Har68}}).} due to Gu and Effros \cite{GuEff09,
  GuEff11}. The key idea is to evaluate the performance of a given
code not under the original product measure, but under another
modified measure which depends on the code and under which the code is
error-free. Thus, when the standard rate bounds are applied along with Fano's inequality, we get a bound involving information quantities for the tilted measure, but without the Fano correction term for the error.

In \cite{GuEff09, GuEff11}, Gu and Effros applied the change of measure argument
for proving strong converse for source coding problems where there exists a terminal that observes all the random variables involved; a particular example
is the Gray-Wyner (GW) problem \cite{GraWyn:74}. A difficulty in
extending this approach to other distributed source coding problems is
the Markov chain constraints among random variables implied by the
information structure of the communication. Specifically, these Markov
chain constraints might be violated when the measure is switched. This
technical difficulty was circumvented in \cite{Wat17} for the
Wyner-Ahlswede-K\"orner (WAK) problem~\cite{AhlKor75, Wyn75iii},
i.e., the problem of lossless source coding with coded side
information, by relating the WAK problem to an extreme case of the GW
problem.  In this paper, we develop a more direct and general recipe
for applying the change of measure argument to various distributed
coding problems.

The second step of our recipe is the replacement of the hard Markov
chain and functional constraints by soft information cost penalties
using variational formulae introduced by Oohama in a series of papers
including \cite{Ooh15, Ooh16}.  These variational formulae involve
optimization over a nonnegative Lagrange multiplier, with the optimum
corresponding to the form with Markov constraints. In fact, when the
change of measure step is applied some of the distributions that need to
preserved, such as the channel transition probabilities, may
change. These, too, can be accommodated by a KL-divergence cost
constraint. At a high level, we  replace all ``hard'' information
constraints by ``soft'' divergence costs and  
complete the proof of strong converse by establishing super- or sub-additivity
of the resulting penalized rate functions.

As an illustration of this approach, consider the lossless source
coding problem; even though this problem does not involve any Markov chain constraint, it illustrates the essential ideas involved in our approach. 
Suppose that an independent and identically distributed (i.i.d.)
source $Z^n$ is compressed to $\varphi(Z^n)$ such that there exists a
function $\psi$ satisfying $\bPr{\psi(\varphi(Z^n)) = Z^n}\geq
1-\ep$. Let $\cC$ denote the set $\{z^n: \psi(\varphi(z^n)) =z^n\}$ of
sequences where no error occurs. The strong converse for the lossless
source coding theorem will be obtained upon showing that the rate of the code is bounded below by
entropy $H(Z)$ asymptotically, irrespective of the value of $0 < \ep < 1$. To show this, we change the probability measure to
$\bPP{\tilde{Z}^n}$ defined by\footnote{In \cite{GuEff09, GuEff11}, the new distribution had a more complicated form.}
\begin{align} \label{eq:changed-measure-introduction}
\bP{\tilde{Z}^n}{z^n} = \bPr{ Z^n = z^n | Z^n \in \cC}.
\end{align}
This measure is not too far from the original measure under
KL-divergence. Indeed,\footnote{\textchange{This simple, but important, observation was used in \cite{Mar86} to provide a simple proof of the blowing-up
    lemma.}}
\[
D(\bPP{\tilde{Z}^n} \|\bPP{Z^n}) \le \log \frac 1 {1-\varepsilon}.
\]
 On the other hand, under $\bPP{\tilde{Z}^n}$, the error probability
 of the code $(\varphi,\psi)$ is exactly zero. Thus, by mimicking the
 standard weak converse arguments, we have
\begin{align*}
\log |\cC| \ge H(\tilde{Z}^n).
\end{align*}
The next step is to single-letterize $H(\tilde{Z}^n)$, which now does
not correspond to a product measure and may not be super-additive on
its own. We circumvent this difficulty by adding a divergence cost to
get
\begin{align}
\frac{1}{n} \log |\cC| &\ge \frac{1}{n} H(\tilde{Z}^n) +
\frac{\alpha}{n} \bigg[ D(\bPP{\tilde{Z}^n} \|\bPP{Z^n}) - \log
  (1/(1-\varepsilon)) \bigg] \nonumber \\
&\ge \min_{\bPP{\tilde{Z}}} \big[ H(\tilde{Z}) + \alpha
  D(\bPP{\tilde{Z}} \| \bPP{Z}) \big] - \frac{\alpha \log
  (1/(1-\varepsilon)) }{n}, \nonumber
\end{align}
for any $\alpha > 0$. The second inequality uses a simple
super-additivity property that we show in
Proposition~\ref{proposition:almost-subadditivity-entropy} for
conditional entropy. The proof of strong converse is completed by
using the following variational formula for entropy:
\begin{align*}
H(Z) = \sup_{\alpha > 0} \min_{\bPP{\tilde{Z}}} \big[ H(\tilde{Z}) +
  \alpha D(\bPP{\tilde{Z}} \| \bPP{Z}) \big].
\end{align*}

Using our recipe, we can obtain simple proof for some known strong
converse results and can, in fact, obtain several new strong converse
results, including for problems involving interactive
communication. The first result we present is the lossy source coding
with side information problem, also known as the Wyner-Ziv (WZ)
problem \cite{WynZiv76}.  The strong converse for the WZ problem was
proved only recently in \cite{Ooh16}. We use our general recipe for
proving a strong converse to give a more compact proof for the WZ
strong converse which, we believe, is more accessible than the
original proof of \cite{Ooh16}.\footnote{The proof in \cite{Ooh16}
  provides a stronger result in form of an explicit lower bound on the
  exponent of the probability of correctness.} The second problem we
consider is the interactive function computation problem
(cf.~\cite{OrlRoc01, MaIsh11, BraRao11}).  Prior to our work, a strong
converse for this well-studied problem was unavailable. A technical
difficulty in showing such a result arises from the multiple auxiliary
random variables and Markov chain constraints that appear in the
optimal sum-rate.  \textchange{The strong converse for the
  interactive function computation problem has attracted attention in
  the theoretical computer science community as well, in the context of direct product theorems in communication complexity. A version of the strong converse
  result was shown in \cite{BraWei15} in a slightly different setting,
  but the basic strong converse itself has been open. Furthermore, the
  information odometer approach used in~\cite{BraWei15} is technically
  much more involved than our simple change of measure argument.}

In addition to the two source coding problems mentioned above, we also apply our recipe for problems of generating common randomness and secret key
with interactive communication \cite{AhlCsi98, Tya13}. The strong converse for these problems with interactive communication were unavailable prior to our work; see~\cite{HayTW14} and the extended version of~\cite{LiuHanVer17} for partial results. Since these problems involve a total variation distance constraint, we need some additional tricks for changing measure. In particular, we seek a replacement for the correctly decoded set of sequences $\cC$. We illustrate the essential idea using
a simple random number generation problem, which is also known as the intrinsic randomness problem (cf.~\cite{Han03}). Suppose that an i.i.d. source $Z^n$ is converted to
$K = \varphi(Z^n)$ such that the total variation distance criterion $d(\bPP{K}, \bPP{\mathtt{unif}})\le \delta$
is satisfied, where $\bPP{\mathtt{unif}}$ is the uniform distribution of the range $\cK$ of $K$. Consider the set
\begin{align} \label{eq:high-entropy-density-set}
{\cal C} = \bigg\{ z^n : \log \frac{1}{\bP{K}{\varphi(z^n)}} \ge \log |\cK | - \log (2/(1-\delta)) \bigg\}
\end{align}
comprising elements $z^n$ mapped to high entropy density realizations of $K$.
It can be seen from our analysis in Section~\ref{sec:CR-SK} that
\[
\bPr{Z^n \in {\cal C}} \ge \frac{1-\delta}{2}.
\]
Thus, by changing the measure to $\bPP{\tilde{Z}^n}$ given in \eqref{eq:changed-measure-introduction}
but using the set ${\cal C}$ of \eqref{eq:high-entropy-density-set}, we have $D(\bPP{\tilde{Z}^n} \|\bPP{Z^n}) \le \log (2/(1-\delta))$.
Furthermore, for this changed measure, the random variable $\tilde{K} = \varphi(\tilde{Z}^n)$ has the min-entropy at least
$\log |\cK| - 2\log(2/(1-\delta))$, which implies 
\begin{align*}
\log |\cK| &\le H_{\min}(\tilde{K}) + 2 \log (2/(1-\delta)) \\
&\le H(\tilde{K}) + 2 \log (2/(1-\delta)) \\
&\le H(\tilde{Z}^n) + 2 \log (2/(1-\delta)).
\end{align*}
The entropy term on the right-side can be bounded using the sub-additivity of  entropy. However, the resulting single-letterized measure may deviate from the original $\bPP{Z}$, which needs to be retained. To that end, we add a 
divergence cost to get 
\begin{align*}
\frac{1}{n} \log |\cK| &\le \frac{1}{n} H(\tilde{Z}^n) - \frac{\alpha}{n}\bigg[ D\left(\bPP{\tilde{Z}^n} \| \bPP{Z^n}\right) - \log (2/(1-\delta)) \bigg] + \frac{2 \log (2/(1-\delta))}{n} \\
&\le \max_{\bPP{\tilde{Z}}} \big[ H(\tilde{Z}) - \alpha D(\bPP{\tilde{Z}} \| \bPP{Z}) \big] + \frac{(\alpha+2)\log(2/(1-\delta))}{n},
\end{align*}
for any $\alpha > 0$. The strong converse for the random number generation problem follows from the variational formula
\begin{align*}
H(Z) = \inf_{\alpha > 0} \max_{\bPP{\tilde{Z}}}\big[ H(\tilde{Z}) - \alpha D(\bPP{\tilde{Z}} \| \bPP{Z}) \big].
\end{align*}

The final setting we consider is the wiretap channel \cite{Wyn75ii, CsiKor78}.
The strong converse theorem for degraded wiretap channel was proved in \cite{HayTyaWat14iii}
(see \cite{TanBlo15} for a partial strong converse). However, its extension to general wiretap channel has remained open.\footnote{The argument in \cite{WeiUlu16} has a technical flaw, 
and we are unable to verify the technically involved proof-sketch in the conference paper~\cite{GraWon17}; \textchange{a full-version of \cite{GraWon17} has not been published so far.}} By using our general recipe, we provide a proof for the strong converse
theorem for the general wiretap channel. Compared to other problems mentioned above, this problem is more involved,
and requires a few more tricks including the expurgation of messages to
replace average guarantees with worst-case guarantees and the
construction of the changed measure using a set with bounded
log-likelihood ratio of wiretappers observation probability and its
probability given the message. Nevertheless, given the technical
difficulties in prior attempts, this is a relatively simple proof.  

Overall, our main message in this work is that strong converses can be
proven using similar techniques as those used for proving weak
converses, applied after an appropriate change of measure. However, we
need to work with new variational forms of capacity formulae where the
hard information constraints are replaced with soft KL-divergence
costs. 

A conceptually related approach for proving strong converse was recently
proposed by Kosut and Kliewer in \cite{KosKli17}. In their approach, 
the strong converse for a given network is reduced first to the weak
converse by adding an extra edge of vanishing rate to the network,
which allows negligible cooperation among users. Then, the strong
converse will follow if the so-called edge removal property holds,
namely the capacity is not changed when the extra edge is
removed. Since Markov chain constraints in multiterminal problems stem
from distributed nature of the problems,  
the replacement of those Markov chain constraints with soft
KL-divergence costs in our recipe is, at high-level, similar to adding
a ``soft edge'' to increase cooperation among the terminals. However,
the soft divergence cost seems to be a more versatile tool; in
particular, it allows us to handle even interactive communication. 

Another related recent work is that of Jose and Kulkarni
\cite{JosKul17, JosKul18}. Their approach considers the performance of
the optimal code for a coding problem and poses it as an optimization
problem, which is further bounded by the value of a linear program
obtained by relaxing some constraints.  Even though this approach
provides tight converse bounds implying strong converse for some
problems, applicability of this approach to problems involving
auxiliary random variables is unclear.

\textchange{In a slightly different direction, Fong
  and Tan \cite{FonTan19} proved strong converse theorems for
  multi-message networks with tight cut-set bound, such as the degraded
  relay channel and relay channel with orthogonal components, for
  both discrete and Gaussian channels. These results are inspired by the result for the reliability 
  function of a DMC with feedback above capacity \cite{CsiKor82}, and
  are different in nature than our setting. In particular, these results do not include multiple auxiliary random variables,
  which is a significant difficulty we overcome in this paper.}
  
The remainder of the paper is organized as follows. We begin by
reviewing a few simple results in the next section, which will be used 
throughout the paper. The strong converse for the WZ problem is given
in 
Section~\ref{s:WZ} and for the function computation problem in
Section~\ref{s:FC}. The next two sections contain problems involving
total variation constraints, with the common randomness generation and  
secret key agreement in Section~\ref{sec:CR-SK}, and the wiretap
channel problem in Section~\ref{sec:wiretap}.  We conclude with
discussions on exponential strong converse and extensions in the final
section.

\paragraph*{Notation} Throughout the paper, we restrict to
discrete random variables taking finitely many values and denote the
random variable with a capital letter, for instance $X$, its range-set
with the corresponding calligraphic, e.g.~$\cX$, and each realization
with a small letter, e.g.~$x$. For information measures, we follow the
standard notations in \cite{CsiKor11}: The entropy, the KL divergence,
and the mutual information are denoted by $H(X)$, $D(\bPP{} \|
\bQQ{})$, and $I(X \wedge Y)$, respectively.  The total
variation distance between two distributions $\bPP{}$ and $\bQQ{}$ is
denoted by $d(\bPP{}, \bQQ{}) := \frac{1}{2} \sum_x |\bP{}{x}-
\bQ{}{x}|$.  For a sequence $X^n = (X_1,\cdots,X_n)$ of random
variables, we denote $X_j^- = (X_1,\ldots,X_{j-1})$ and $X_j^+ =
(X_{j+1},\ldots,X_n)$, where $X_1^-$ and $X_n^+$ are regarded as the
empty string.  The indicator function is denoted by
$\indicator[\cdot]$.  Other notations will be introduced when
necessary, but are standard notations used in the multiterminal
information theory literature.

\section{Technical Tools}
We begin by assembling the simple tools that we will use repeatedly in
our proofs. The first is perhaps a new observation; the other two are
standard.

Typically, we use additivity of (conditional) entropy for independent
random variables for proving converse bounds. However, in our proofs,
once we change the measure, the resulting random variables need not be
independent. Nevertheless, the following simple result fills the gap
and shows that if we add a divergence cost for change of measure, the sum
is super-additive.
\begin{proposition} \label{proposition:almost-subadditivity-entropy}
For i.i.d. $\bPP{X^n Y^n}$ with common distribution $\bPP{XY}$ and any
$\bPP{\tilde{X}^n\tilde{Y}^n}$, we have
\begin{align*}
H(\tilde{X}^n|\tilde{Y}^n) + D(\bPP{\tilde{X}^n \tilde{Y}^n} \|
\bPP{X^n Y^n}) \ge n \big[ H(\tilde{X}_J | \tilde{Y}_J) +
  D(\bPP{\tilde{X}_J \tilde{Y}_J} \| \bPP{XY}) \big],
\end{align*}
where $J\sim {\tt unif}(\{1,...,n\})$ is the time-sharing random
variable and is assumed to be independent of all the other random
variables involved.
\end{proposition}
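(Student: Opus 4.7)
The plan is to split the KL divergence using the i.i.d.\ structure of $\bPP{X^nY^n}$ and then combine it with the conditional entropy term before single-letterizing. Writing $p = \bPP{XY}$ and $\tilde p_i = \bPP{\tilde X_i \tilde Y_i}$, a direct computation (expanding $-\log \prod_i p(x_i,y_i)$ termwise) gives
\begin{align*}
D(\bPP{\tilde X^n\tilde Y^n}\|\bPP{X^nY^n})
= -H(\tilde X^n,\tilde Y^n) + \sum_{i=1}^n \big[H(\tilde X_i,\tilde Y_i) + D(\tilde p_i\|p)\big].
\end{align*}
Adding $H(\tilde X^n|\tilde Y^n) = H(\tilde X^n,\tilde Y^n) - H(\tilde Y^n)$ to both sides and then applying sub-additivity $H(\tilde Y^n)\le \sum_i H(\tilde Y_i)$ yields the per-letter bound
\begin{align*}
H(\tilde X^n|\tilde Y^n) + D(\bPP{\tilde X^n\tilde Y^n}\|\bPP{X^nY^n})
\ge \sum_{i=1}^n \big[H(\tilde X_i|\tilde Y_i) + D(\tilde p_i\|p)\big].
\end{align*}

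The remaining step is to show that the right-hand side is at least $n\big[H(\tilde X_J|\tilde Y_J) + D(\bPP{\tilde X_J\tilde Y_J}\|\bPP{XY})\big]$, where the $J$-averaged quantities use the mixture distribution $\bPP{\tilde X_J\tilde Y_J}(x,y) = \frac1n\sum_i \tilde p_i(x,y)$. This reduces to verifying that the functional
\begin{align*}
f(\mathrm{Q}_{XY}) \;:=\; H_{\mathrm{Q}}(X|Y) + D(\mathrm{Q}_{XY}\|\bPP{XY})
\end{align*}
is convex in $\mathrm{Q}_{XY}$; then Jensen's inequality applied to the average over $i$ (with weights $1/n$) delivers the claim. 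The convexity will be the heart of the argument, because neither the conditional entropy $H(X|Y)$ nor the divergence is convex in $\mathrm{Q}_{XY}$ in a way that would give the right sign on its own.

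To prove convexity of $f$, I would rewrite it by expanding $D(\mathrm{Q}_{XY}\|\bPP{XY}) = D(\mathrm{Q}_Y\|\bPP{Y}) + D(\mathrm{Q}_{X|Y}\|\bPP{X|Y}\mid \mathrm{Q}_Y)$ and observing that $H_{\mathrm{Q}}(X|Y) + D(\mathrm{Q}_{X|Y}\|\bPP{X|Y}\mid \mathrm{Q}_Y) = \mathbb{E}_{\mathrm{Q}}[-\log \bPP{X|Y}(X|Y)]$, which is \emph{linear} in $\mathrm{Q}_{XY}$. This leaves $f(\mathrm{Q}_{XY}) = \mathbb{E}_{\mathrm{Q}}[-\log \bPP{X|Y}(X|Y)] + D(\mathrm{Q}_Y\|\bPP{Y})$; the first term is linear and the second is convex in $\mathrm{Q}_Y$ (hence in $\mathrm{Q}_{XY}$), so $f$ is convex. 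Applying this convexity to the uniform mixture of the $\tilde p_i$'s completes the proof.
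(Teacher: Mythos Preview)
Your proof is correct and is essentially the same as the paper's, just organized differently: the paper splits $D(\bPP{\tilde X^n\tilde Y^n}\|\bPP{X^nY^n})$ as $D(\bPP{\tilde X^n|\tilde Y^n}\|\bPP{X^n|Y^n}\mid\bPP{\tilde Y^n})+D(\bPP{\tilde Y^n}\|\bPP{Y^n})$, observes (exactly as you do) that the first piece plus $H(\tilde X^n|\tilde Y^n)$ is the linear cross-entropy $\mathbb{E}_{\tilde P}[-\log \bPP{X|Y}]$ and hence single-letterizes to an equality, and then bounds $D(\bPP{\tilde Y^n}\|\bPP{Y^n})\ge\sum_j D(\bPP{\tilde Y_j}\|\bPP{Y})\ge nD(\bPP{\tilde Y_J}\|\bPP{Y})$ by the chain rule and convexity of KL. Your two inequalities---sub-additivity of $H(\tilde Y^n)$ and convexity of $f(\mathrm{Q})=\mathbb{E}_{\mathrm{Q}}[-\log\bPP{X|Y}]+D(\mathrm{Q}_Y\|\bPP{Y})$---are precisely these two steps repackaged, so the arguments coincide.
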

\begin{proof}
The left-side can be expressed as
\begin{align*}
H(\tilde{X}^n|\tilde{Y}^n) + D(\bPP{\tilde{X}^n | \tilde{Y}^n} \|
\bPP{X^n |Y^n} | \bPP{\tilde{Y}^n}) + D(\bPP{\tilde{Y}^n} \|
\bPP{Y^n}).
\end{align*}
The sum of the first two terms satisfy
\begin{align*}
H(\tilde{X}^n|\tilde{Y}^n) + D(\bPP{\tilde{X}^n | \tilde{Y}^n} \|
\bPP{X^n |Y^n} | \bPP{\tilde{Y}^n}) &= \sum_{x^n, y^n} \bP{\tilde{X}^n
  \tilde{Y}^n}{x^n,y^n} \log \frac{1}{\bP{X^n | Y^n}{x^n|y^n}} \\ &=
\sum_{j=1}^n \sum_{x,y} \bP{\tilde{X}_j \tilde{Y}_j}{x,y} \log
\frac{1}{\bP{X|Y}{x|y}} \\ &= n \sum_{x,y} \bP{\tilde{X}_J
  \tilde{Y}_J}{x,y} \log \frac{1}{\bP{X|Y}{x|y}} \\ &= n H(\tilde{X}_J
| \tilde{Y}_J) + n D(\bPP{\tilde{X}_J | \tilde{Y}_J} \| \bPP{X|Y} |
\bPP{\tilde{Y}_J}),
\end{align*}
and the third satisfies
\begin{align*}
D(\bPP{\tilde{Y}^n} \| \bPP{Y^n}) &= \sum_{j=1}^n D(\bPP{\tilde{Y}_j
  |\tilde{Y}_j^-} \| \bPP{Y} | \bPP{\tilde{Y}_j^-}) \\ &\ge
\sum_{j=1}^n D(\bPP{\tilde{Y}_j} \| \bPP{Y}) \\ &\ge n
D(\bPP{\tilde{Y}_J} \| \bPP{Y}),
\end{align*}
which completes the proof.
\end{proof}
The next tool we present is essential for handling the distributed
settings we consider. It allows us to replace the ``hard'' Markov
chain and function constraints in our bounds with ``soft'' costs using
a variational formula introduced by Oohama (cf.~\cite{Ooh16}) in this
context.  This is important since these hard constraints may not hold
once we change the measure.
We describe this approach in an abstract form below; proofs for
specific variants needed for our results are similar and have been
relegated to the appendix.

Let $G(\bPP{Z_1Z_2})$ be a bounded continuous function of
$\bPP{Z_1Z_2}$. Define\footnote{We abbreviate
  $G(\bP{Z_1Z_2|U}{\cdot|U})$ as $G(\bPP{Z_1Z_2|U})$.}
\[
\overline{G}(\bPP{Z_1Z_2}) =
\inf_{\bPP{U|Z_1Z_2}: \bPP{U|Z_1Z_2}=\bPP{U|Z_1}}\bEE{G(\bPP{Z_1Z_2|U})}.
\]
Note that by the support lemma~\cite{CsiKor11}, it suffices to
restrict the infimum to $U$ with $|\cU|\leq |\cZ_1|$, and thereby the
$\inf$ can be replaced by $\min$ using compactness of the finite
dimensional probability simplex. The next result we present is a
variational formula for $\overline{G}(\bPP{Z_1Z_2})$ that allows us to
replace the minimization over $U$ satisfying the Markov chain
condition $U \mc Z_1 \mc Z_2$ to that over all $\bPP{U|Z_1Z_2}$.
\begin{proposition}\label{p:remove_Markov}
Let $G(\bPP{Z_1Z_2})$ be a bounded continuous function over the
probability simplex $\cP(\cZ_1\times \cZ_2)$. Then, the function
$\overline{G}(\bPP{Z_1Z_2})$
satisfies
\begin{align}
\overline{G}(\bPP{Z_1Z_2}) = \sup_{\alpha > 0} \min_{\bPP{U|Z_1Z_2}}
\bigg[ \bEE{G(\bPP{Z_1Z_2|U})} + \alpha I(U\wedge Z_2|Z_1) \bigg].
\label{eq:abstract-variational-formula}
\end{align}
\end{proposition}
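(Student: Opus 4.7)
The plan is to establish the two directions of the equality separately. For the inequality $\sup_\alpha\min_{\bPP{U|Z_1Z_2}}[\cdots]\leq \overline{G}(\bPP{Z_1Z_2})$, I would restrict the inner minimum to conditionals $\bPP{U|Z_1Z_2}$ that satisfy the Markov chain $U\mc Z_1\mc Z_2$. For any such choice the penalty $\alpha I(U\wedge Z_2|Z_1)$ vanishes and the objective reduces to $\bEE{G(\bPP{Z_1Z_2|U})}$; taking the infimum over Markov-compatible $U$ yields $\overline{G}(\bPP{Z_1Z_2})$, which therefore upper bounds the unconstrained inner minimum for every $\alpha>0$ and hence also the supremum in $\alpha$.

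For the reverse inequality, I would first observe that the inner-minimum objective is monotone non-decreasing in $\alpha$, so the supremum equals the limit $L:=\lim_{\alpha\to\infty}\min_{\bPP{U|Z_1Z_2}}[\cdots]$. Along a sequence $\alpha_n\to\infty$, pick a near-minimiser $\bPP{U^{(n)}|Z_1Z_2}$ at $\alpha_n$. Boundedness of $G$ combined with the easy upper bound from the first direction shows that the whole objective stays in a bounded interval, so $\alpha_n I(U^{(n)}\wedge Z_2|Z_1)=O(1)$, which forces $I(U^{(n)}\wedge Z_2|Z_1)\to 0$. A standard Caratheodory/support-lemma argument lets me impose a uniform cardinality bound on $\cU^{(n)}$, and compactness of the finite-dimensional simplex then supplies a subsequence along which the joint $\bPP{U^{(n)}Z_1Z_2}$ converges to some $\bPP{U^\ast Z_1Z_2}$ with correct marginal $\bPP{Z_1Z_2}$. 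Continuity of conditional mutual information forces $I(U^\ast\wedge Z_2|Z_1)=0$, so $U^\ast\mc Z_1\mc Z_2$; continuity of $G$ gives $\bEE{G(\bPP{Z_1Z_2|U^{(n_k)}})}\to\bEE{G(\bPP{Z_1Z_2|U^\ast})}\geq \overline{G}(\bPP{Z_1Z_2})$, and dropping the non-negative mutual-information penalty yields $L\geq \overline{G}(\bPP{Z_1Z_2})$.

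The step I expect to be the main obstacle is this passage-to-the-limit argument, because atoms $u$ of $U^{(n)}$ with $\bP{U^{(n)}}{u}\to 0$ may have wildly oscillating conditionals $\bPP{Z_1Z_2|U^{(n)}=u}$. My plan for getting around this is to phrase everything in terms of the joint distribution $\bPP{UZ_1Z_2}$ rather than separately in terms of $\bPP{U|Z_1Z_2}$ and $\bPP{Z_1Z_2|U}$: written as $\sum_u \bP{U}{u}\, G(\bPP{Z_1Z_2|U=u})$, the expectation is continuous in the joint distribution, since any atom with vanishing $\bP{U}{u}$ contributes negligibly by the boundedness of $G$, and analogously for $I(U\wedge Z_2|Z_1)$ with the $0\log 0=0$ convention. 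The uniform cardinality bound furnished by the support lemma is exactly what keeps the whole analysis inside a fixed compact ambient set.
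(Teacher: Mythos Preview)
Your argument is correct, but the paper takes a shorter route for the reverse inequality that avoids the compactness and support-lemma machinery entirely. For each fixed $\alpha>0$, the paper picks the inner minimiser $\mathrm{P}^\alpha_{U|Z_1Z_2}$ and observes that $I(U\wedge Z_2|Z_1)=D(\mathrm{P}^\alpha_{UZ_1Z_2}\,\|\,\mathrm{P}^\alpha_{U|Z_1}\bPP{Z_1Z_2})$; since the objective is bounded in an interval $[a,b]$, this divergence is at most $(b-a)/\alpha$. The point is that $\tilde{\mathrm{P}}_{UZ_1Z_2}:=\mathrm{P}^\alpha_{U|Z_1}\bPP{Z_1Z_2}$ already satisfies the Markov chain $U\mc Z_1\mc Z_2$ exactly, so $\bEE{G(\tilde{\mathrm{P}}_{Z_1Z_2|U})}\ge\overline{G}(\bPP{Z_1Z_2})$ directly. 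Uniform continuity of $G$ (via Pinsker to convert the small divergence into small total variation) then gives $\bEE{G(\mathrm{P}^\alpha_{Z_1Z_2|U})}\ge\overline{G}(\bPP{Z_1Z_2})-\Delta((b-a)/\alpha)$ with $\Delta(t)\to 0$, and letting $\alpha\to\infty$ finishes.

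The difference is that you extract a limiting distribution $U^\ast$ satisfying the Markov constraint, whereas the paper manufactures, for every $\alpha$, an explicit Markov-feasible distribution $\tilde{\mathrm{P}}$ that is $O(1/\alpha)$-close in divergence to the minimiser. Your approach needs the support lemma to pin down a common finite $\cU$ so that compactness applies, and then a careful continuity argument for $\bEE{G(\bPP{Z_1Z_2|U})}$ and $I(U\wedge Z_2|Z_1)$ as functions of the joint law (which you correctly flag and handle). The paper's approach sidesteps all of that: no subsequences, no cardinality bound for this step, and the comparison is between two distributions sharing the same $U$-marginal, which makes the continuity estimate cleaner. On the other hand, your compactness route is the natural template when the ``project onto the constraint set'' trick is not as transparent, e.g.\ when several Markov or functional constraints are in play simultaneously; indeed, the appendix proof of \eqref{e:varitional_function} uses essentially your subsequence idea.
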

\begin{proof}
The left-side is greater than or equal to the right-side since, for
every $\alpha>0$, the left-side is obtained by restricting the inner
minimization on the right to the distribution satisfying $U\mc Z_1\mc
Z_2$. To prove the other direction, first note that $I(U \wedge
Z_1|Z_2)$ can be written as $D(\bPP{UZ_1Z_2} \| \bPP{U|Z_1} \bPP{Z_1}
\bPP{Z_2|Z_1})$. Given $\alpha > 0$, let
$\mathrm{P}^\alpha_{U|Z_1Z_2}$ attain the inner minimum in
\eqref{eq:abstract-variational-formula} for $\alpha$. Since the
function $G(\cdot)$ is bounded, say it lies in an interval $[a,b]$,
the same holds for the function $\overline{G}(\cdot)$. Therefore, it
must hold that $D(\mathrm{P}^\alpha_{UZ_1Z_2} \|
\mathrm{P}^\alpha_{U|Z_1} \bPP{Z_1Z_2}) \le (b-a)/\alpha$. Let
$\tilde{\mathrm{P}}_{UZ_1Z_2}=
\mathrm{P}^\alpha_{U|Z_1}\bPP{Z_1Z_2}$. Since $G(\cdot)$ is
continuous\footnote{We are assuming $G(\cdot)$ is continuous with
  respect to the total variation distance. Then, it is also continuous
  with respect to the KL divergence using Pinsker's inequality.}  on
a compact domain, it is also uniformly continuous. Therefore, there
exists a function $\Delta(t)$ satisfying $\Delta(t)\to 0$ as $t\to0$
such that
\begin{align*}
\bEE{G(\mathrm{P}^\alpha_{Z_1Z_2|U})} &\ge
\bEE{G(\tilde{\mathrm{P}}_{Z_1Z_2|U})} - \Delta((b-a)/\alpha) \\ &\ge
\overline{G}(\bPP{Z_1Z_2}) - \Delta((b-a)/\alpha).
\end{align*}
Thus, we obtain the required inequality by taking $\alpha \to \infty$,
which completes the proof.
\end{proof}
The variational form above can be used to handle even multiple Markov
relations by adding a similar cost for each constraint. Furthermore,
we can even handle functional constraints such as $H(Z_1|U,Z_2) =0$ by
adding an additional cost $\alpha H(Z_1|U, Z_2)$. These extensions of
Proposition~\ref{p:remove_Markov} will be used in our proofs.

Additionally, we also need a cost to account for the deviation from
the underlying fixed source and channel distributions that occur when
we apply our change of measure arguments. The following alternative
variational formula for $\overline{G}(\bPP{Z_1Z_2})$ will be handy:
\begin{proposition} \label{p:remove_Markov2}
Let $G(\bPP{Z_1Z_2})$ be a bounded continuous function over the
probability simplex $\cP(\cZ_1\times \cZ_2)$. Then, we have
\begin{align*}
\overline{G}(\bPP{Z_1Z_2}) = \sup_{\alpha > 0}
\min_{\bPP{\tilde{U}\tilde{Z}_1\tilde{Z}_2}} \bigg[
  \bEE{G(\bPP{\tilde{Z}_1\tilde{Z}_2|\tilde{U}})} + \alpha
  \big(D(\bPP{\tilde{Z}_1\tilde{Z}_2} \| \bPP{Z_1Z_2}) + I(U\wedge
  Z_2|Z_1) \big) \bigg].
\end{align*}
\end{proposition}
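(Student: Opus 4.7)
The plan is to follow the same two-direction argument as in Proposition~\ref{p:remove_Markov}, with one extra ingredient: to handle the additional KL-divergence cost on the $(Z_1,Z_2)$ marginal, I will work with a single auxiliary ``corrected'' distribution obtained by simultaneously imposing the Markov chain \emph{and} replacing the marginal with the true $\bPP{Z_1Z_2}$.

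First, for the easy direction (right-side $\le$ left-side), I would fix $\alpha > 0$ and restrict the inner minimum to joint distributions $\bPP{\tilde{U}\tilde{Z}_1\tilde{Z}_2}$ of the form $\bPP{U|Z_1}\bPP{Z_1Z_2}$, i.e., those with marginal equal to $\bPP{Z_1Z_2}$ and satisfying $\tilde{U}\mc\tilde{Z}_1\mc\tilde{Z}_2$. Both penalties vanish on this class, and minimizing the remaining $\bEE{G(\bPP{\tilde{Z}_1\tilde{Z}_2|\tilde{U}})}$ over it gives exactly $\overline{G}(\bPP{Z_1Z_2})$; taking $\sup_\alpha$ preserves the inequality.

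For the other direction, fix $\alpha>0$ and let $\mathrm{P}^\alpha_{\tilde{U}\tilde{Z}_1\tilde{Z}_2}$ attain the inner minimum. Using boundedness of $G$ on $[a,b]$ and comparing with any feasible Markov-respecting distribution with the correct marginal (which achieves at most $b$ in the bracket), I deduce
\[
D(\mathrm{P}^\alpha_{\tilde{Z}_1\tilde{Z}_2}\|\bPP{Z_1Z_2}) + I_{\mathrm{P}^\alpha}(\tilde{U}\wedge\tilde{Z}_2|\tilde{Z}_1) \le (b-a)/\alpha.
\]
Now define the comparison distribution $\tilde{\mathrm{P}}_{UZ_1Z_2} = \mathrm{P}^\alpha_{U|Z_1}\bPP{Z_1Z_2}$, which is feasible for the original problem defining $\overline{G}$, so $\bE{\tilde{\mathrm{P}}}{G(\tilde{\mathrm{P}}_{Z_1Z_2|U})}\ge \overline{G}(\bPP{Z_1Z_2})$. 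The crucial identity I would use is the chain-rule decomposition
\[
D(\mathrm{P}^\alpha_{UZ_1Z_2}\|\tilde{\mathrm{P}}_{UZ_1Z_2}) = I_{\mathrm{P}^\alpha}(\tilde{U}\wedge\tilde{Z}_2|\tilde{Z}_1) + D(\mathrm{P}^\alpha_{\tilde{Z}_1\tilde{Z}_2}\|\bPP{Z_1Z_2}),
\]
which is the precise reason the two penalties appear together in the proposition. This bound combined with Pinsker's inequality gives closeness of $\mathrm{P}^\alpha_{UZ_1Z_2}$ and $\tilde{\mathrm{P}}_{UZ_1Z_2}$ in total variation.

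Finally, uniform continuity of $\bPP{UZ_1Z_2}\mapsto \bEE{G(\bPP{Z_1Z_2|U})}$ on the compact probability simplex (with modulus $\Delta(t)\to 0$) yields
\[
\bE{\mathrm{P}^\alpha}{G(\mathrm{P}^\alpha_{Z_1Z_2|U})} \ge \overline{G}(\bPP{Z_1Z_2}) - \Delta((b-a)/\alpha),
\]
so letting $\alpha\to\infty$ and taking the supremum completes the proof. The main obstacle is really a bookkeeping one, namely verifying that the chain-rule identity above cleanly ties the two cost terms to a single KL divergence between $\mathrm{P}^\alpha$ and the Markov/true-marginal surrogate $\tilde{\mathrm{P}}$; once this is in hand, the argument is a direct adaptation of the proof of Proposition~\ref{p:remove_Markov}.
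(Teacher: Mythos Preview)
Your proposal is correct and follows essentially the same route the paper indicates (the paper does not prove this proposition directly, stating only that ``the proof is similar to that of Proposition~\ref{p:remove_Markov}''). Your chain-rule identity $D(\mathrm{P}^\alpha_{UZ_1Z_2}\|\mathrm{P}^\alpha_{U|Z_1}\bPP{Z_1Z_2}) = I_{\mathrm{P}^\alpha}(\tilde U\wedge\tilde Z_2|\tilde Z_1)+D(\mathrm{P}^\alpha_{\tilde Z_1\tilde Z_2}\|\bPP{Z_1Z_2})$ is exactly the right bookkeeping device to collapse the two penalties into a single KL divergence toward the Markov/true-marginal surrogate, after which the argument reduces to Proposition~\ref{p:remove_Markov}. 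One small refinement worth noting: in Proposition~\ref{p:remove_Markov} the paper only invokes uniform continuity of $G$ itself, which suffices there because $\mathrm{P}^\alpha_U=\tilde{\mathrm{P}}_U$ (the $Z_1$-marginal is fixed); here the $U$-marginals may differ, so your choice to invoke uniform continuity of the full map $\bPP{UZ_1Z_2}\mapsto\bEE{G(\bPP{Z_1Z_2|U})}$ on the compact simplex is the correct and necessary adaptation.
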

The proof is similar to that of Proposition~\ref{p:remove_Markov};
instead of proving this meta-result, we will prove our specific
variational formulae in the appendix.

The final result we recall is a standard tool for single-letterization
from~\cite[pg. 314]{CsiKor11}-- its power lies in its validity for
arbitrary distributions. For random variables $X^n, Y^n, U$ with an
arbitrary joint distribution $\bPP{X^n Y^n U}$, it holds that
\begin{align}
H(X^n|U)- H(Y^n|U)=\sum_{i=1}^n H(X_i| X_i^-,Y_i^+, U) - H(Y_i|
X_i^-,Y_i^+, U).
\label{e:CKM_inequality}
\end{align}

\section{Lossy source coding with side-information}\label{s:WZ}
In the lossy source coding problem with side-information, the goal is
to compress a source sequence to enable its recovery within a
prespecified distortion at a receiver with side-information.
Formally, for a given source $\bPP{XY}$ on a finite alphabet ${\cal
  X}\times {\cal Y}$, a lossy source code with side-information
consists of an encoder $\varphi: {\cal X}^n \to {\cal M}$ and a
decoder $\psi:{\cal M} \times {\cal Y}^n \to {\cal Z}^n$, where ${\cal
  Z}$ is the reproduction alphabet. Consider a distortion measure
$d:{\cal X}\times {\cal Z} \to [0, D_{\max}]$ and its $n$-fold
extension $d(x^n,z^n) = \sum_{i=1}^n d(x_i,z_i)$. A rate-distortion
pair $(R,D)$ is $\varepsilon$-achievable if, for every sufficiently
large $n$, there exists a code $(\varphi,\psi)$ such that
\begin{align} \label{eq:WZ-excess-probability}
\bPr{ d(X^n, \psi(\varphi(X^n),Y^n)) > nD } \le \varepsilon
\end{align}
and
\begin{align} \label{eq:WZ-rate}
\frac{1}{n} \log |{\cal M}| \le R.
\end{align}
Let ${\cal R}_{\mathtt{WZ}}(\varepsilon|\bPP{XY})$ be the closure of
the set of all $\varepsilon$-achievable rate-distortion pairs. Define
\begin{align*}
{\cal R}_{\mathtt{WZ}}(\bPP{XY}) := \bigcap_{0 < \varepsilon < 1}
{\cal R}_{\mathtt{WZ}}(\varepsilon|\bPP{XY}).
\end{align*}
The following characterization\footnote{In fact, we can restrict $Z$
  to be a function of $(U,Y)$.} of ${\cal R}_{\mathtt{WZ}}(\bPP{XY})$
was given in \cite{WynZiv76}:
\begin{align*}
{\cal R}_{\mathtt{WZ}}(\bPP{XY}) &= \{ (R,D): \exists\, (U,Z) \mbox{
  s.t. }  |{\cal U}|\le |{\cal X}|+1, \\ &~~~~~~U \mc X \mc Y, Z \mc
(U,Y) \mc X, \\ &~~~~~~ R \ge I(U\wedge X|Y), \mathbb{E}[d(X,Z)]\le
D\}.
\end{align*}
The set ${\cal R}_{\mathtt{WZ}}(\bPP{XY})$ is closed and convex and
can be expressed alternatively using tangent lines as follows:
\begin{align*}
{\cal R}_{\mathtt{WZ}}(\bPP{XY}) &= \bigcap_{\mu \ge 0} \{ (R,D): R +
\mu D \ge R_{\mathtt{WZ}}^\mu(\bPP{XY}) \},
\end{align*}
where
\begin{align*}
R_{\mathtt{WZ}}^\mu(\bPP{XY}) &:= \min\big\{ I(U \wedge X|Y) + \mu
\mathbb{E}[d(X,Z)] : \\ &~~~~~~~~~~~~ \exists\, (U,Z) \mbox{ s.t. }
|{\cal U}|\le |{\cal X}|, U \mc X \mc Y, Z \mc (U,Y) \mc X \big\}.
\end{align*}
The optimal rate region above involves Markov relations, which will
become intractable once we change the measure. Furthermore, once we
change the measure and obtain a single-letter bound, the source
distribution may deviate from $\bPP{XY}$. To circumvent these
difficulties, we switch to the following variational form of
$R_{\mathtt{WZ}}^\mu(\bPP{XY})$, which will be proved in Appendix
\ref{app:WZ}:
\begin{align}
R_{\mathtt{WZ}}^\mu(\bPP{XY}) = \sup_{\alpha > 0}
R_{\mathtt{WZ}}^{\mu,\alpha}(\bPP{XY}), \label{eq:WZ-variational-form}
\end{align} 
where
\begin{align}
R_{\mathtt{WZ}}^{\mu,\alpha}(\bPP{XY}) &:=
\min_{\bPP{\tilde{U}\tilde{X}\tilde{Y}\tilde{Z}}} \big[ I(\tilde{U}
  \wedge \tilde{X}|\tilde{Y}) + \mu \mathbb{E}[d(\tilde{X},\tilde{Z})]
  + \alpha D(\bPP{\tilde{U}\tilde{X}\tilde{Y}\tilde{Z}} \|
  \bQQ{\tilde{U}XY\tilde{Z}}) + D(\bPP{\tilde{X}\tilde{Y}}\|\bPP{XY})
  \big] \label{eq:WZ-divergence-form} \\ &=
\min_{\bPP{\tilde{U}\tilde{X}\tilde{Y}\tilde{Z}}} \big[ I(\tilde{U}
  \wedge \tilde{X}|\tilde{Y}) + \mu \mathbb{E}[d(\tilde{X},\tilde{Z})]
  \nonumber \\ &~~~+ \big( (\alpha+1) D(\bPP{\tilde{X}\tilde{Y}} \|
  \bPP{XY}) + \alpha I(\tilde{U} \wedge \tilde{Y}|\tilde{X}) + \alpha
  I(\tilde{Z} \wedge \tilde{X}|\tilde{U},\tilde{Y}) \big)
  \big] \label{eq:WZ-mutual-information-form}
\end{align}
and $\bQQ{\tilde{U}XY\tilde{Z}} = \bPP{\tilde{Z}|\tilde{U}\tilde{Y}}
\bPP{\tilde{U}|\tilde{X}} \bPP{XY}$ is the distribution induced from
each $\bPP{\tilde{U}\tilde{X}\tilde{Y}\tilde{Z}}$. \textchange{Note that this
$\bQQ{\tilde{U}XY\tilde{Z}}$ respects the information structure of the
coding problem; we will use this convention in our usage of notation $\bQQ{}$ throughout.} 
By the support
lemma~\cite{CsiKor11}, the range $\cU$ of $\tU$ can be restricted to
$|\cU|\le |\cX||\cY||\cZ|$.
\begin{remark}\label{r:two_form}
In effect, we have replaced the ``hard constraints'' imposed by the
requirements of preserving the input source distribution and the
Markov relations between the communication sent, the source and the
reconstructed estimate with ``soft'' divergence penalties which are
amenable to single-letterization using standard chain rules.  The
factor $(\alpha+1)$ instead of $\alpha$ is only to enable a technical
manipulation in the proof of Theorem \ref{theorem:WZ} below. However,
semantically, the bound can be understood by just considering an extra
$ \alpha D(\bPP{\tilde{U}\tilde{X}\tilde{Y}\tilde{Z}} \|
\bQQ{\tilde{U}XY\tilde{Z}})$ cost which captures all the
aforementioned constraints. In fact, a factor in the form of any
function $f(\alpha)$ of $\alpha$ that blows-up to infinity as $\alpha$
tends to infinity will work, since we take $\alpha \to \infty$ at the
end.  In the definition of $R_{\mathtt{WZ}}^{\mu,\alpha}(\bPP{XY}) $,
the divergence form \eqref{eq:WZ-divergence-form} is heuristically
appealing and affords a simple proof of the variational formula
\eqref{eq:WZ-variational-form} (see Appendix \ref{app:WZ}); on the
other hand, the mutual information form
\eqref{eq:WZ-mutual-information-form} is amenable to
single-letterization in the proof of Theorem \ref{theorem:WZ} below.
\end{remark}

We are now in a position to prove the strong converse. The main step
is to show the following result, which is obtained simply by using the
super-additivity of the lower bound obtained after change of measure.
\begin{theorem} \label{theorem:WZ}
For every $n \in \mathbb{N}$, $\mu \ge 0$, and $\alpha > 0$, we have
\begin{align*}
R_{\mathtt{WZ}}^{\mu,\alpha}(\mathrm{P}_{XY}^n) \ge n
R_{\mathtt{WZ}}^{\mu,\alpha}(\bPP{XY}).
\end{align*}
\end{theorem}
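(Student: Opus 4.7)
The plan is to establish super-additivity of $R_{\mathtt{WZ}}^{\mu,\alpha}$ at product sources directly. Given any joint distribution $\bPP{\tilde U \tilde X^n \tilde Y^n \tilde Z^n}$ attaining the minimum defining $R_{\mathtt{WZ}}^{\mu,\alpha}(\bPP{XY}^n)$, I would construct a single-letter candidate $(W, \tilde X_J, \tilde Y_J, \tilde Z_J)$ whose objective is at most $\tfrac{1}{n}$ of the $n$-letter objective. Here $J$ is a time-sharing variable uniform on $\{1,\dots,n\}$ independent of everything else, and $W$ is an auxiliary built from $\tilde U$ together with time-indexed past/future slices of $\tilde X^n$ and $\tilde Y^n$, in the spirit of the classical Wyner-Ziv converse auxiliary; a standard support-lemma reduction afterwards restores the cardinality bound on $\cU$.

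I would then single-letterize the five terms in \eqref{eq:WZ-mutual-information-form} separately, using chain rules, the identity \eqref{e:CKM_inequality}, and Proposition~\ref{proposition:almost-subadditivity-entropy}. The distortion is trivially additive, and the divergence $D(\bPP{\tilde X^n \tilde Y^n} \| \bPP{XY}^n) \ge n D(\bPP{\tilde X_J \tilde Y_J} \| \bPP{XY})$ by super-additivity of KL against product distributions. The coefficient $(\alpha+1)$ on the divergence is the technical device underlying the proof, as flagged in Remark~\ref{r:two_form}: I would split $(\alpha+1) = 1 + \alpha$, allocating one unit to single-letterize the rate MI $I(\tilde U \wedge \tilde X^n | \tilde Y^n)$ through Proposition~\ref{proposition:almost-subadditivity-entropy} applied to the decomposition $H(\tilde X^n | \tilde Y^n) - H(\tilde X^n | \tilde U, \tilde Y^n)$, and the remaining $\alpha$ units to single-letterize the Markov penalty $\alpha I(\tilde U \wedge \tilde Y^n | \tilde X^n)$ analogously. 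The functional penalty $\alpha I(\tilde Z^n \wedge \tilde X^n | \tilde U, \tilde Y^n)$ is handled purely by chain-rule bounds, using the trivial inequality $I(\tilde Z^n \wedge \tilde X_i | C) \ge I(\tilde Z_i \wedge \tilde X_i | C)$ to get a lower bound that matches, term by term, the corresponding single-letter CMI evaluated at $W$.

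The hard part will be choosing $W$ so that it simultaneously enables all three conditional mutual information single-letterizations: each CMI term favors a slightly different conditioning structure involving $\tilde X_J^{\pm}$ and $\tilde Y_J^{\pm}$, and monotonicity of entropy under extra conditioning must act in the correct direction for each. The CKM identity \eqref{e:CKM_inequality} is the crucial tool here, letting us switch between backward and forward chain-rule decompositions of a conditional entropy such as $H(\tilde Y^n | \tilde U, \tilde X^n) = \sum_i H(\tilde Y_i | \tilde U, \tilde X^n, \tilde Y_i^+)$, so that the conditioning in the single-letter bound can be chosen compatibly with the $W$ used for the rate term. Once $W$ is fixed (for instance $W = (\tilde U, \tilde X_J^-, \tilde Y_J^+, J)$, possibly augmented with more of $\tilde Y_J^-$ depending on how the third term is bounded), summing the five single-letterized contributions produces the single-letter objective evaluated at $(W, \tilde X_J, \tilde Y_J, \tilde Z_J)$, which is at least $R_{\mathtt{WZ}}^{\mu,\alpha}(\bPP{XY})$ since $(W, \tilde X_J, \tilde Y_J, \tilde Z_J)$ is a particular feasible point for the single-letter minimization, yielding the claimed inequality.
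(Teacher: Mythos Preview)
Your overall strategy---construct the auxiliary $W=(\tilde U,\tilde X_J^-,\tilde Y_J^+,J)$, split $(\alpha+1)D=D+\alpha D$ and feed each unit of divergence into Proposition~\ref{proposition:almost-subadditivity-entropy} to single-letterize $H(\tilde X^n|\tilde Y^n)$ and $H(\tilde Y^n|\tilde X^n)$---matches the paper's $G_1/G_2$ decomposition exactly, and your treatment of $-H(\tilde X^n|\tilde U,\tilde Y^n)$ and the distortion term is fine.

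However, there is a genuine gap in your plan for the functional penalty $I(\tilde Z^n\wedge\tilde X^n|\tilde U,\tilde Y^n)$. After your chain rule and the inequality $I(\tilde Z^n\wedge\tilde X_j|C)\ge I(\tilde Z_j\wedge\tilde X_j|C)$ you obtain $\sum_j I(\tilde Z_j\wedge\tilde X_j|\tilde U,\tilde X_j^-,\tilde Y^n)$, whose conditioning still contains $\tilde Y_j^-$. To match the single-letter term $I(\tilde Z_J\wedge\tilde X_J|W,\tilde Y_J)$ you would need to \emph{remove} $\tilde Y_j^-$ from the conditioning while preserving a lower bound, but conditional mutual information is not monotone in the conditioning variable, so this step fails. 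Augmenting $W$ with $\tilde Y_J^-$ does not rescue the argument either: that breaks the single-letterization of the Markov penalty $-H(\tilde Y^n|\tilde U,\tilde X^n)$, since neither the forward nor the backward chain rule then yields conditioning comparable to $(W,\tilde X_J)$.

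The paper resolves this by \emph{not} treating the two $\alpha$-penalties separately. It combines them algebraically as
\[
-H(\tilde Y^n|\tilde U,\tilde X^n)+I(\tilde Z^n\wedge\tilde X^n|\tilde U,\tilde Y^n)
= -H(\tilde X^n|\tilde U,\tilde Y^n,\tilde Z^n)+\bigl[H(\tilde X^n|\tilde U)-H(\tilde Y^n|\tilde U)\bigr],
\]
and then applies the CKM identity~\eqref{e:CKM_inequality} to the bracketed difference (this is where~\eqref{e:CKM_inequality} is genuinely needed; what you wrote, $H(\tilde Y^n|\tilde U,\tilde X^n)=\sum_i H(\tilde Y_i|\tilde U,\tilde X^n,\tilde Y_i^+)$, is just a backward chain rule). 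The term $-H(\tilde X^n|\tilde U,\tilde Y^n,\tilde Z^n)$ is then chain-ruled and bounded by dropping $\tilde Y_j^-$ and $\tilde Z^{\,n\setminus j}$ from the conditioning of a \emph{negative} entropy, which goes in the correct direction. This coupling of the Markov and functional penalties is the missing idea in your proposal.
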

As a corollary, we obtain the strong converse for the lossy source
coding with side-information problem, which was shown in \cite{Ooh16}
using a different, more complicated method.
\begin{corollary} \label{corollary:WZ}
For every $0 < \varepsilon < 1$, we have $ {\cal
  R}_{\mathtt{WZ}}(\varepsilon|\bPP{XY})= {\cal
  R}_{\mathtt{WZ}}(\bPP{XY})$.
\end{corollary}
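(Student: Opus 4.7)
My plan is to prove the nontrivial inclusion $\mathcal{R}_{\mathtt{WZ}}(\varepsilon|\bPP{XY}) \subseteq \mathcal{R}_{\mathtt{WZ}}(\bPP{XY})$; the opposite inclusion is immediate from the definition of $\mathcal{R}_{\mathtt{WZ}}(\bPP{XY})$ as an intersection over $\varepsilon$. Via the tangent-line description, it suffices to prove that any $\varepsilon$-achievable $(R,D)$ satisfies $R + \mu D \geq R^{\mu}_{\mathtt{WZ}}(\bPP{XY})$ for each $\mu \geq 0$, and by the variational formula \eqref{eq:WZ-variational-form} this reduces further to showing $R + \mu D \geq R^{\mu,\alpha}_{\mathtt{WZ}}(\bPP{XY})$ for every fixed $\alpha > 0$, after which we take $\sup_{\alpha>0}$.

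To obtain this per-$\alpha$ bound, I would apply the change-of-measure recipe from the introduction. Fix a code $(\varphi_n,\psi_n)$ certifying $\varepsilon$-achievability of $(R,D)$ at blocklength $n$ and define the good set $\cC_n := \{(x^n,y^n) : d(x^n,\psi_n(\varphi_n(x^n),y^n)) \leq nD\}$; by \eqref{eq:WZ-excess-probability}, $\bPr{(X^n,Y^n)\in\cC_n} \geq 1-\varepsilon$. Tilt the source to $\bPP{\tilde{X}^n\tilde{Y}^n} := \bPP{X^n Y^n}(\,\cdot\,|\,\cC_n)$; a direct computation yields $D(\bPP{\tilde{X}^n\tilde{Y}^n}\|\bPP{X^n Y^n}) \leq \log(1/(1-\varepsilon))$. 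With $M := \varphi_n(\tilde{X}^n)$ and $\tilde{Z}^n := \psi_n(M,\tilde{Y}^n)$, the encoder and decoder remain deterministic maps under the tilt, so $I(M\wedge \tilde{Y}^n\,|\,\tilde{X}^n) = 0$ and $I(\tilde{Z}^n\wedge \tilde{X}^n\,|\,M,\tilde{Y}^n) = 0$, while $d(\tilde{X}^n,\tilde{Z}^n) \leq nD$ pointwise by the definition of $\cC_n$. Plugging the tuple $(M,\tilde{X}^n,\tilde{Y}^n,\tilde{Z}^n)$ into the minimization \eqref{eq:WZ-mutual-information-form} for $R^{\mu,\alpha}_{\mathtt{WZ}}(\bPP{XY}^n)$ and using $I(M\wedge\tilde{X}^n\,|\,\tilde{Y}^n) \leq H(M) \leq nR$ yields
\[
R^{\mu,\alpha}_{\mathtt{WZ}}(\bPP{XY}^n) \;\leq\; nR + \mu n D + (\alpha+1)\log\tfrac{1}{1-\varepsilon}.
\]

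Finally I would invoke Theorem~\ref{theorem:WZ} to lower-bound the left side by $n R^{\mu,\alpha}_{\mathtt{WZ}}(\bPP{XY})$, divide by $n$, let $n\to\infty$, and apply \eqref{eq:WZ-variational-form} to conclude that $R + \mu D \geq R^{\mu}_{\mathtt{WZ}}(\bPP{XY})$. The heavy lifting is already absorbed into Theorem~\ref{theorem:WZ}, which performs the nontrivial single-letterization of the penalized objective; the corollary itself need only verify (i) that the change of measure keeps the encoder/decoder functional so the Markov-penalty terms of \eqref{eq:WZ-mutual-information-form} vanish \emph{exactly}, and (ii) that the divergence penalty stays uniformly bounded in $n$ by $\log(1/(1-\varepsilon))$, which is immediate from conditioning on a set of probability at least $1-\varepsilon$. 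The $(\alpha+1)$ factor rather than $\alpha$ in \eqref{eq:WZ-mutual-information-form} is inconsequential here, since the additive correction $(\alpha+1)\log(1/(1-\varepsilon))/n$ vanishes as $n\to\infty$ for each fixed $\alpha$.
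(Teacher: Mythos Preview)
Your proposal is correct and follows essentially the same approach as the paper's own proof: condition on the no-excess-distortion set to obtain the tilted measure, observe that the functional nature of the encoder/decoder makes the two Markov-penalty terms vanish exactly and that the divergence penalty is at most $(\alpha+1)\log\frac{1}{1-\varepsilon}$, then invoke Theorem~\ref{theorem:WZ} and the variational formula~\eqref{eq:WZ-variational-form}. Apart from notational choices ($M$ versus the paper's $\tilde S$, $\cC_n$ versus $\cD$), the argument is identical.
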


\paragraph*{Proof of Corollary \ref{corollary:WZ}}
For a given code $(\varphi,\psi)$ satisfying
\eqref{eq:WZ-excess-probability} and \eqref{eq:WZ-rate}, define
\[
{\cal D} = \{ (x^n,y^n) : d(x^n, \psi(\varphi(x^n),y^n)) \le nD \}.
\]
Further, let $\bPP{\tilde{X}^n\tilde{Y}^n}$ be defined by
\begin{align*}
\bP{\tilde{X}^n \tilde{Y}^n}{x^n,y^n} :=
\frac{\mathrm{P}_{XY}^n(x^n,y^n) \indicator[(x^n,y^n) \in {\cal
      D}]}{\mathrm{P}_{XY}^n({\cal D})}.
\end{align*}
Then, the excess distortion probability of the same code
$(\varphi,\psi)$ for the source $(\tilde{X}^n,\tilde{Y}^n)$ is exactly
$0$, which implies $\tilde{Z}^n =
\psi(\varphi(\tilde{X}^n),\tilde{Y}^n)$ satisfies
$\mathbb{E}[d(\tilde{X}^n,\tilde{Z}^n)] \le nD$. Thus, by mimicking
the standard weak converse proof, we have
\begin{align*}
  n (R + \mu D)&\ge I(\tilde{S} \wedge \tilde{X}^n|\tilde{Y}^n) + \mu
  \mathbb{E}[d(\tilde{X}^n,\tilde{Z}^n)],
\end{align*}
where $\tilde{S} = \varphi(\tilde{X}^n)$.  Also,
\begin{align*}
D(\bPP{\tilde{X}^n \tilde{Y}^n} \| \mathrm{P}_{XY}^n) = \log
\frac{1}{\mathrm{P}_{XY}^n({\cal D})} \nonumber \le \log
\frac{1}{1-\varepsilon}.
\end{align*}
Thus, by noting that costs $I(\tilde{S} \wedge
\tilde{Y}^n|\tilde{X}^n)$ and $I(\tilde{Z}^n \wedge
\tilde{X}^n|\tilde{S},\tilde{Y}^n)$ are both $0$, we have
\begin{align*}
n (R + \mu D)&\ge I(\tilde{S} \wedge \tilde{X}^n|\tilde{Y}^n) + \mu
\mathbb{E}[d(\tilde{X}^n,\tilde{Z}^n)] + \big(
(\alpha+1)D(\bPP{\tilde{X}^n \tilde{Y}^n} \| \mathrm{P}_{XY}^n)
\\ &~~~ + \alpha I(\tilde{S} \wedge \tilde{Y}^n|\tilde{X}^n) + \alpha
I(\tilde{Z}^n \wedge \tilde{X}^n|\tilde{S},\tilde{Y}^n) \big) -
(\alpha+1)\log \frac{1}{1-\varepsilon} \\ &\ge
R_{\mathtt{WZ}}^{\mu,\alpha}(\mathrm{P}_{XY}^n) - (\alpha+1)\log
\frac{1}{1-\varepsilon}.
\end{align*}
Therefore, by Theorem~\ref{theorem:WZ}, we have
\begin{align}
R + \mu D \ge R_{\mathtt{WZ}}^{\mu,\alpha}(\bPP{XY}) -
\frac{(\alpha+1)}{n} \log \frac 1
     {1-\varepsilon} \label{eq:final-bound-WZ}
\end{align}
for every $\mu\geq 0$ and $\alpha>0$, whereby the corollary follows
from~\eqref{eq:WZ-variational-form}.  \qed

\paragraph*{Proof of Theorem \ref{theorem:WZ}}
By setting
\begin{align*}
G_1(\bPP{\tilde{X}^n\tilde{Y}^n}) &:= H(\tilde{X}^n|\tilde{Y}^n) +
\alpha H(\tilde{Y}^n|\tilde{X}^n) + (\alpha+1)
D(\bPP{\tilde{X}^n\tilde{Y}^n} \| \mathrm{P}_{XY}^n),
\\ G_2(\bPP{\tilde{U}\tilde{X}^n\tilde{Y}^n\tilde{Z}^n}) &:= -
H(\tilde{X}^n|\tilde{U},\tilde{Y}^n) + \mu
\mathbb{E}[d(\tilde{X}^n,\tilde{Z}^n)] + \alpha\big( -
H(\tilde{Y}^n|\tilde{U},\tilde{X}^n) + I(\tilde{Z}^n \wedge
\tilde{X}^n|\tilde{U},\tilde{Y}^n) \big),
\end{align*}
for given $\bPP{\tilde{U}\tilde{X}^n\tilde{Y}^n\tilde{Z}^n}$, we can
write
\begin{align*}
R_{\mathtt{WZ}}^{\mu,\alpha}(\mathrm{P}_{XY}^n) =
\min_{\bPP{\tilde{U}\tilde{X}^n\tilde{Y}^n\tilde{Z}^n}}\big[
  G_1(\bPP{\tilde{X}^n\tilde{Y}^n}) +
  G_2(\bPP{\tilde{U}\tilde{X}^n\tilde{Y}^n\tilde{Z}^n}) \big].
\end{align*}
Fix arbitrary $\bPP{\tilde{U}\tilde{X}^n\tilde{Y}^n\tilde{Z}^n}$.  By
Proposition \ref{proposition:almost-subadditivity-entropy},
$G_1(\bPP{\tilde{X}^n\tilde{Y}^n})$ can be lower bounded
as\footnote{By a slight abuse of notation, $G_1(\bPP{\tilde{X}_J
    \tilde{Y}_J})$ is defined by replacing
  $\bPP{\tilde{X}^n\tilde{Y}^n}$ and $\mathrm{P}_{XY}^n$ with
  $\bPP{\tilde{X}_J \tilde{Y}_J}$ and $\mathrm{P}_{XY}$ in the
  definition of $G_1(\bPP{\tilde{X}^n\tilde{Y}^n})$.}
\begin{align} 
G_1(\bPP{\tilde{X}^n\tilde{Y}^n}) \ge n G_1(\bPP{\tilde{X}_J
  \tilde{Y}_J}).
\label{eq:WZ-G1-bound}
\end{align}
For $G_2(\bPP{\tilde{U}\tilde{X}^n\tilde{Y}^n\tilde{Z}^n})$, note that
\begin{align}
- H(\tilde{X}^n|\tilde{U},\tilde{Y}^n) &= - \sum_{j=1}^n
H(\tilde{X}_j|\tilde{U},\tilde{X}_j^-,\tilde{Y}^n) \nonumber \\ &\ge -
\sum_{j=1}^n H(\tilde{X}_j|\tilde{U},\tilde{X}_j^-,\tilde{Y}_j^+,
Y_j), \nonumber \\ &= - n H(\tilde{X}_J | \tilde{U}_J,J,\tilde{Y}_J),
\nonumber
\end{align}
where $\tilde{U}_j = (\tilde{U},\tilde{X}_j^-,\tilde{Y}_j^+)$. Also,
$\mathbb{E}[d(\tilde{X}^n,\tilde{Z}^n)]
= n \mathbb{E}[ d(\tilde{X}_J, \tilde{Z}_J)].$
For the remaining terms in $G_2$, we have
\begin{align}
\lefteqn{ - H(\tilde{Y}^n|\tilde{U},\tilde{X}^n) + I(\tilde{Z}^n
  \wedge \tilde{X}^n|\tilde{U},\tilde{Y}^n)} \nonumber \\
&= - H(\tilde{X}^n|\tilde{U},\tilde{Y}^n,\tilde{Z}^n) +
H(\tilde{X}^n|\tilde{U}) - H(\tilde{Y}^n|\tilde{U}) \nonumber \\ &=
\sum_{j=1}^n \big[ - H(\tilde{X}_j |
  \tilde{U},\tilde{X}_j^-,\tilde{Y}^n,\tilde{Z}^n) + H(\tilde{X}_j |
  \tilde{U},\tilde{X}_j^-,\tilde{Y}_j^+) - H(\tilde{Y}_j |
  \tilde{U},\tilde{X}_j^-,\tilde{Y}_j^+) \big] \nonumber \\
&\geq n\big[ - H(\tilde{X}_J | \tilde{U}_J, J,
  \tilde{Y}_J,\tilde{Z}_J) + H(\tilde{X}_J | \tilde{U}_J,J) -
  H(\tilde{Y}_J | \tilde{U}_J,J) \big] \nonumber \\
&= n \big[ - H(\tilde{Y}_J|\tilde{U}_J,J,\tilde{X}_J) + I(\tilde{Z}_J
  \wedge \tilde{X}_J | \tilde{U}_J,J,\tilde{Y}_J) \big], \nonumber
\end{align}
where the second identity uses~\eqref{e:CKM_inequality}. Upon
combining the observations above, we get
\begin{align} 
G_2(\bPP{\tilde{U}\tilde{X}^n\tilde{Y}^n\tilde{Z}^n}) \ge n
G_2(\bPP{\tilde{U}_J J \tilde{X}_J \tilde{Y}_J \tilde{Z}_J}).
\label{eq:WZ-G2-bound}
\end{align}
Since \eqref{eq:WZ-G1-bound} and \eqref{eq:WZ-G2-bound} hold for an
arbitrary $\bPP{\tilde{U}\tilde{X}^n\tilde{Y}^n\tilde{Z}^n}$, the
proof is complete. \qed


\section{Interactive Function Computation Problem}\label{s:FC}
The second problem we consider entails the computation of a function
$f$ of $(X,Y)$ using interactive communication. For the ease of
presentation, we limit ourselves to protocols with $2$-rounds of
communication, but our analysis extends to protocols with bounded
(independent of $n$) rounds.

For a given source $\bPP{XY}$ on a finite alphabet ${\cal X}\times
{\cal Y}$, an ($2$-round) interactive communication protocol $\pi$
with inputs $(X^n,Y^n)$ consists of mappings $\varphi_1: \cX^n \to
\{0,1\}^{l_1}$ and $\varphi_2: \cY^n \times \{0,1\}^{l_1}\to
\{0,1\}^{l_2}$; the length of such a protocol $\pi$ is $l_1+l_2$. The
random transcript of the protocol is denoted by $\Pi = (\Pi_1, \Pi_2)$
where $\Pi_1 = \varphi_1(X^n)$, and $\Pi_2 = \varphi_2(Y^n, \Pi_1)$.

A protocol $\pi$ $\ep$-computes a function $f:\cX\times \cY\to \cZ$ if
we can form estimates $F_1^n = \psi_1(X^n, \Pi_2)$ and
$F_2^n=\psi_2(Y^n, \Pi_1)$ such that
\[
\bPr{ f(X_i, Y_i) = F_{1i} = F_{2i}, \, \forall i\in [n]} \geq
1-\ep.
\]
A rate $R>0$ is an $\ep$-achievable communication rate for $f$ if, for
all $n$ sufficiently large, there exists an interactive communication
protocol $\pi$ of length $|\pi|$ less than $nR$ that $\ep$-computes
$f$.  The infimum over all $\ep$-achievable communication rates for
$f$ is denoted by $R_f(\ep|\bPP{XY})$. The supremum over $\ep\in(0,1)$
of all $\ep$-achievable communication rates for $f$ is denoted by
$R_f(\bPP{XY})$.

The following characterization of ${\cal R}_{f}(\bPP{XY})$ was given
in~\cite{OrlRoc01}:\footnote{See~\cite{MaIsh11, BraRao11} for the
  extension to multiple rounds.}
\begin{align} \label{eq:R-interactive-function-computation}
R_f(\bPP{XY}) = \min I(U,V \wedge X|Y)+I(U,V\wedge Y|X),
\end{align}
where the minimum is over all $U, V$ satisfying $U \mc X\mc Y$, $V \mc
(Y,U)\mc X$; $H(f(X,Y)| Y,U) = H(f(X,Y)| X,U,V)=0$; and $|\cU|\leq
|\cX|, |\cV|\leq |\cY||\cX|$.
\begin{remark}
The right-side of \eqref{eq:R-interactive-function-computation} is
referred to as the {\em intrinsic information complexity} of $f$
(using $2$-round communication protocols) in \textchange{the} computer science
literature (cf.~\cite{BraRao11}). By noting the Markov relations $U
\mc X\mc Y$ and $V \mc (Y,U)\mc X$, we can obtain the following
equivalent expression for it:
\begin{align*}
I(U \wedge X|Y) + I(V \wedge Y|U,X),
\end{align*}
which is perhaps more commonly used in the information theory
literature (cf.~\cite{OrlRoc01, MaIsh11}).  In a similar vein, we use
the {\em extrinsic information complexity} $I(U, V \wedge X,Y)$ to
describe the results in Section \ref{sec:CR-SK}, which, too, can be
expressed alternatively using the aforementioned Markov relations.
\end{remark}

In the manner of Proposition~\ref{p:remove_Markov2}, we can replace
the ``hard'' Markov chain and functional constraints with ``soft''
divergence penalties to get (see Appendix
\ref{app:function-computation} for the proof)
\begin{align}
R_f(\bPP{XY}) = \sup_{\alpha>0} R_f^\alpha(\bPP{XY}),
\label{e:varitional_function}
\end{align}
where
\begin{align*}
 R_f^\alpha(\bPP{XY}) &:= \min_{\bPP{\tU \tV\tX\tY}} \big[
   I(\tU,\tV\wedge \tX|\tY)+I(\tU,\tV\wedge \tY|\tX) + \alpha
   D(\bPP{\tU \tV \tX \tY} \| \bQQ{\tU\tV XY})
   \\ &~~~+(\alpha+2)D(\bPP{\tX\tY} \| \bPP{XY}) + \alpha\big(
   D(\bPP{\tU|\tX\tY} \| \bPP{\tU|\tX} | \bPP{\tX\tY})
   +H(\tF|\tY,\tU)+ H(\tF|\tX, \tU, \tV) \big) \big] \\ &=
 \min_{\bPP{\tU \tV\tX\tY}} \big[ I(\tU,\tV\wedge
   \tX|\tY)+I(\tU,\tV\wedge \tY|\tX) +
   (2\alpha+2)D(\bPP{\tX\tY}\|\bPP{XY}) \\ &~~~ + \alpha\big( 2I(\tU
   \wedge \tY|\tX) + I(\tV\wedge \tX|\tY, \tU) +H(\tF|\tY,\tU)+
   H(\tF|\tX, \tU, \tV) \big) \big],
\end{align*}
$\tF= f(\tX,\tY)$, and $\bQQ{\tU \tV XY} = \bPP{\tV |\tU\tY}
\bPP{\tU|\tX}\bPP{XY}$ is the distribution induced from each $\bPP{\tU
  \tV\tX\tY}$ \textchange{and  respects the information constraints
of the coding problem.} The ranges $\cU$ and $\cV$ of $\tU$ and $\tV$ can be
restricted to $|\cU| \le |\cX||\cY|$ and $|\cV|\le |\cX|^2 |\cY|^2$.
The two forms described above have different utilities as described in
Remark~\ref{r:two_form}. As in the previous section, we divide the
proof of strong converse into two parts. The main technical component
is the following result.
\begin{theorem} \label{theorem:function_computation}
For every $n \in \mathbb{N}$ and $\alpha > 0$, we have
\begin{align*}
R_f^\alpha(\mathrm{P}_{XY}^n) \geq nR_f^{\alpha}(\bPP{XY}),
\end{align*}
where in defining $R_f^\alpha(\mathrm{P}_{XY}^n)$ we use $\tilde{F}^n
= (f(\tX_1,\tY_1), ..., f(\tX_n, \tY_n))$.
\end{theorem}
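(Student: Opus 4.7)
The plan is to follow the template of the proof of Theorem~\ref{theorem:WZ}: split the objective into a source-side piece $G_1$ that is handled by Proposition~\ref{proposition:almost-subadditivity-entropy} and an auxiliary-side piece $G_2$ that is single-letterized using the CKM identity~\eqref{e:CKM_inequality} together with a carefully chosen time-sharing auxiliary. I would work throughout with the mutual-information form of $R_f^\alpha$ given after \eqref{e:varitional_function}.

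First I would set
\[
G_1 := H(\tX^n|\tY^n) + (1+2\alpha)\, H(\tY^n|\tX^n) + (2\alpha+2)\, D(\bPP{\tX^n\tY^n} \| \mathrm{P}_{XY}^n),
\]
and collect every remaining term (those involving $\tU$, $\tV$, or $\tF^n$) into $G_2$, so that $R_f^\alpha(\mathrm{P}_{XY}^n) = \min [G_1 + G_2]$. Applying Proposition~\ref{proposition:almost-subadditivity-entropy} once to $H(\tX^n|\tY^n) + D(\bPP{\tX^n\tY^n}\|\mathrm{P}_{XY}^n)$ and a second time, rescaled by $1+2\alpha$, to $H(\tY^n|\tX^n) + D(\bPP{\tX^n\tY^n}\|\mathrm{P}_{XY}^n)$ delivers $G_1 \ge n\, G_1(\bPP{\tX_J\tY_J})$, where $J$ is an independent uniform time-sharing variable on $\{1,\dots,n\}$.

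For $G_2$, I would apply the Wyner--Ziv-style rewriting
\[
-H(\tY^n|\tU,\tX^n) + I(\tV\wedge \tX^n|\tU,\tY^n) = H(\tX^n|\tU) - H(\tY^n|\tU) - H(\tX^n|\tU,\tV,\tY^n),
\]
use CKM~\eqref{e:CKM_inequality} on the difference $H(\tX^n|\tU) - H(\tY^n|\tU)$ at the auxiliary $\tU_J := (\tU, \tX_J^-, \tY_J^+, J)$, and single-letterize each remaining $-H(\cdot | \cdot)$ term by the standard trick of dropping the conditioning on $\tY_j^-$ (and $\tX_j^+$ when present), since that only inflates the subtracted entropy. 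This yields bounds such as $-H(\tX^n|\tU,\tV,\tY^n) \ge -n\, H(\tX_J|\tU_J,\tV_J,\tY_J)$ with $\tV_J := (\tV, \tX_J^-, \tY_J^+, J)$, and analogous bounds for $-H(\tY^n|\tU,\tV,\tX^n)$ and $-H(\tY^n|\tU,\tX^n)$.

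The main technical hurdle is the single-letterization of the function-recovery terms $H(\tF^n|\tY^n,\tU)$ and $H(\tF^n|\tX^n,\tU,\tV)$, which are absent from the WZ setting. Expanding $H(\tF^n|\tY^n,\tU) = \sum_{j=1}^n H(\tF_j|\tF_j^-, \tY^n, \tU)$ and using that $\tF_j^-$ is a function of $(\tX_j^-, \tY_j^-)$, each summand is at least $H(\tF_j | \tY_j, \tU, \tX_j^-, \tY_j^-, \tY_j^+)$; matching this against the target single-letter quantity appears to force enlarging the auxiliary to $\tU_J^+ := (\tU, \tX_J^-, \tY_J^-, \tY_J^+, J)$, whereas CKM naturally prefers $\tU_J = (\tU, \tX_J^-, \tY_J^+, J)$ without $\tY_J^-$. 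I expect the resolution is either to run all single-letterizations with the common augmented auxiliary $(\tU_J^+, \tV_J^+)$ and verify the relevant monotonicity in the CKM step directly under the extra $\tY_J^-$ conditioning, or to handle the CKM-based bounds and the functional bounds with different intermediate auxiliaries and then relax both resulting expressions to a common minimizer over $\bPP{\tU\tV\tX\tY}$ at the end. In either case, the crux of the argument is bookkeeping the conditioning so that each of the many terms in $G_2$ is correctly dominated by its single-letter counterpart.
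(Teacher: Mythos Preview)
Your decomposition into $G_1$ and $G_2$ and the treatment of $G_1$ via Proposition~\ref{proposition:almost-subadditivity-entropy} match the paper exactly. The gap is in $G_2$: you correctly locate the obstacle in lower-bounding the positive terms $H(\tF^n|\tY^n,\tU)$ and $H(\tF^n|\tX^n,\tU,\tV)$, but neither of your proposed fixes works. Augmenting the auxiliary to include $\tY_J^-$ breaks the CKM step, since~\eqref{e:CKM_inequality} is an \emph{identity} producing exactly the conditioning $(\tX_j^-,\tY_j^+)$ and the entropy \emph{difference} $H(\tX_j|\cdot)-H(\tY_j|\cdot)$ has no monotonicity under adding $\tY_j^-$. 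Running different auxiliaries for different terms and ``relaxing to a common minimizer'' also fails: the single-letter bound is a minimum over one joint law $\bPP{\tU\tV\tX\tY}$, and a sum of terms evaluated at \emph{different} auxiliaries is not dominated by that minimum.

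The missing idea is that one should not lower-bound $H(\tF^n|\tY^n,\tU)$ on its own. Instead, the paper manufactures matching negative terms and absorbs the function entropy into them via $\tF^n=f(\tX^n,\tY^n)$: rewriting the $\alpha$-block as
\[
2\bigl[H(\tX^n|\tU)-H(\tY^n|\tU)\bigr]+\bigl[H(\tY^n|\tU,\tV)-H(\tX^n|\tU,\tV)\bigr]
-H(\tX^n|\tY^n,\tU)-H(\tY^n|\tX^n,\tU,\tV)+H(\tF^n|\tY^n,\tU)+H(\tF^n|\tX^n,\tU,\tV),
\]
one pairs $-H(\tX^n|\tY^n,\tU)+H(\tF^n|\tY^n,\tU)=-H(\tX^n|\tF^n,\tY^n,\tU)$ and $-H(\tY^n|\tX^n,\tU,\tV)+H(\tF^n|\tX^n,\tU,\tV)=-H(\tY^n|\tF^n,\tX^n,\tU,\tV)$. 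Now everything is either a CKM difference (handled with $\tU_j=(\tU,\tX_j^-,\tY_j^+)$, second auxiliary just $\tV$) or a \emph{negative} conditional entropy, which single-letterizes by dropping conditioning; $\tF_J$ re-emerges at the end since $\tF_J=f(\tX_J,\tY_J)$. Your WZ-style rewriting uses only one copy of $-H(\tY^n|\tX^n,\tU)$ and produces $-H(\tX^n|\tU,\tV,\tY^n)$, which is the wrong partner for $H(\tF^n|\tY^n,\tU)$; the paper's rearrangement deliberately spends both copies of $-H(\tY^n|\tX^n,\tU)$ to create $-H(\tX^n|\tY^n,\tU)$ (no $\tV$) and $-H(\tY^n|\tX^n,\tU,\tV)$, exactly the partners needed.
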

As corollary, we get the strong converse theorem for function
computation.
\begin{corollary} \label{corollary:function_computation}
For every $0 < \varepsilon < 1$, we have $ R_f(\varepsilon|\bPP{XY})=
R_f(\bPP{XY})$.
\end{corollary}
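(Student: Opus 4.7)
The plan is to mimic the template of the proof of Corollary~\ref{corollary:WZ}, replacing the Wyner--Ziv ingredients by their interactive computation counterparts. Given a protocol $(\varphi_1,\varphi_2)$ with $|\pi|\le nR$ and estimators $(\psi_1,\psi_2)$ that $\varepsilon$-compute $f$, let
\[
\cC = \bigl\{(x^n,y^n) : \psi_1(x^n,\varphi_2(y^n,\varphi_1(x^n)))_i = \psi_2(y^n,\varphi_1(x^n))_i = f(x_i,y_i),~\forall\, i\bigr\},
\]
which satisfies $\mathrm{P}_{XY}^n(\cC)\ge 1-\varepsilon$, and tilt the source to
\[
\bP{\tX^n\tY^n}{x^n,y^n} := \frac{\mathrm{P}_{XY}^n(x^n,y^n)\,\indicator[(x^n,y^n)\in\cC]}{\mathrm{P}_{XY}^n(\cC)},
\]
so that $D(\bPP{\tX^n\tY^n}\|\mathrm{P}_{XY}^n) \le \log\frac{1}{1-\varepsilon}$. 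Set $\tU = \varphi_1(\tX^n)$, $\tV = \varphi_2(\tY^n,\tU)$, and $\tF^n = (f(\tX_1,\tY_1),\ldots,f(\tX_n,\tY_n))$.

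At this evaluation every ``hard'' penalty appearing in the mutual-information form of $R_f^\alpha(\mathrm{P}_{XY}^n)$ vanishes: since the encoders are deterministic, $\tU$ is a function of $\tX^n$ and $\tV$ is a function of $(\tU,\tY^n)$, forcing $I(\tU\wedge \tY^n|\tX^n)=0$ and $I(\tV\wedge \tX^n|\tU,\tY^n)=0$; since the protocol is error-free on $\cC$, both $H(\tF^n|\tY^n,\tU)=0$ and $H(\tF^n|\tX^n,\tU,\tV)=0$. The rate sum is controlled by the protocol length via
\[
I(\tU,\tV\wedge \tX^n|\tY^n)+I(\tU,\tV\wedge \tY^n|\tX^n) = I(\tU\wedge \tX^n|\tY^n)+I(\tV\wedge \tY^n|\tU,\tX^n) \le H(\tU)+H(\tV|\tU) \le |\pi| \le nR,
\]
where the first identity invokes the two Markov chains just noted. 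Plugging $(\tU,\tV,\tX^n,\tY^n)$ into the definition of $R_f^\alpha(\mathrm{P}_{XY}^n)$ and absorbing the divergence cost through the bound $(2\alpha+2) D(\bPP{\tX^n\tY^n}\|\mathrm{P}_{XY}^n)\le (2\alpha+2)\log\frac{1}{1-\varepsilon}$ yields
\[
R_f^\alpha(\mathrm{P}_{XY}^n) \le nR + (2\alpha+2)\log\frac{1}{1-\varepsilon}.
\]

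Applying Theorem~\ref{theorem:function_computation} gives $n R_f^\alpha(\bPP{XY}) \le R_f^\alpha(\mathrm{P}_{XY}^n)$, so dividing by $n$, letting $n\to\infty$, and then taking $\sup_{\alpha>0}$ via the variational identity~\eqref{e:varitional_function} produces $R\ge R_f(\bPP{XY})$. Since $R_f(\varepsilon|\bPP{XY})\le R_f(\bPP{XY})$ holds by definition, the corollary follows.

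The main obstacle is modest but requires care: one must verify that the natural choice $(\tU,\tV)=(\Pi_1,\Pi_2)$ simultaneously respects both Markov constraints and both functional constraints under the tilted measure, and that the total communication length $|\pi|$ still dominates the asymmetric sum $I(\tU,\tV\wedge \tX^n|\tY^n)+I(\tU,\tV\wedge \tY^n|\tX^n)$ after the equivalent rewriting using those Markov chains. All the genuine technical content is packaged inside Theorem~\ref{theorem:function_computation} (super-additivity of the penalized rate) and the variational formula~\eqref{e:varitional_function}; with those in hand, the corollary reduces mechanically to the Wyner--Ziv template.
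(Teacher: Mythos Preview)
Your proposal is correct and follows essentially the same approach as the paper, which simply states that the corollary follows from Theorem~\ref{theorem:function_computation} by repeating the steps of Corollary~\ref{corollary:WZ} with the changed measure obtained by conditioning on the set $\cD$ of correctly computed inputs. Your explicit verification that the specific choice $(\tU,\tV)=(\varphi_1(\tX^n),\varphi_2(\tY^n,\tU))$ annihilates all four penalty terms and that the sum $I(\tU,\tV\wedge\tX^n|\tY^n)+I(\tU,\tV\wedge\tY^n|\tX^n)$ collapses to at most $|\pi|$ via the two Markov relations is exactly the content the paper leaves implicit.
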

The proof of corollary follows from
Theorem~\ref{theorem:function_computation} using similar steps as the
proof of Corollary~\ref{corollary:WZ} where the changed measure
$\bPP{\tilde{X}^n\tilde{Y}^n}$ is now obtained by conditioning on the
set of inputs for which no error occurs, i.e., the set
\begin{align*}
{\cal D} &= \{ (x^n,y^n) : \psi_1(x^n, \varphi_2(y^n,
\varphi_1(x^n)))= \psi_2(y^n, \varphi_1(x^n)) = (f(x_1, y_1), ...,
f(x_n, y_n))\}.
\end{align*}
We close this section with a proof of
Theorem~\ref{theorem:function_computation}.

\paragraph*{Proof of Theorem~\ref{theorem:function_computation}} 
By setting
\begin{align*}
G_1(\bPP{\tX^n, \tY^n}) &:= \big[H(\tX^n|\tY^n)+
  D(\bPP{\tX^n\tY^n}\|\bPP{X^nY^n})\big] \\ &\qquad
+(2\alpha+1)\big[H(\tY^n|\tX^n)+
  D(\bPP{\tX^n\tY^n}\|\bPP{X^nY^n})\big]
\end{align*}
and
\begin{align*}
G_2(\bPP{\tilde{U}\tilde{V}\tilde{X}^n\tilde{Y}^n\tilde{F}^n}) &:= -
H(\tilde{X}^n|\tilde{Y}^n, \tilde U, \tilde V) -
H(\tilde{Y}^n|\tilde{X}^n, \tilde U, \tilde V) - 2\alpha
H(\tilde{Y}^n|\tilde{X}^n, \tilde U) \\ &\qquad\qquad + \alpha
I(\tilde V\wedge \tilde{X}^n|\tilde{Y}^n, \tilde U) + \alpha
H(\tilde{F}^n|\tilde{Y}^n, \tilde U) + \alpha
H(\tilde{F}^n|\tilde{X}^n, \tilde U, \tilde V)
\end{align*}
for given $\bPP{\tilde{U}\tilde{V}\tilde{X}^n\tilde{Y}^n}$ with
$\tilde{F}_i = f(\tilde{X}_i,\tilde{Y}_i)$ for $1 \le i \le n$, we can
write
\begin{align*}
R_f^\alpha(\mathrm{P}_{XY}^n) =
\min_{\bPP{\tilde{U}\tilde{V}\tilde{X}^n\tilde{Y}^n}} \big[
  G_1(\bPP{\tX^n, \tY^n}) +
  G_2(\bPP{\tilde{U}\tilde{V}\tilde{X}^n\tilde{Y}^n\tilde{F}^n})
  \big].
\end{align*} 
Fix arbitrary $\bPP{\tilde{U}\tilde{V}\tilde{X}^n\tilde{Y}^n}$.  By
Proposition \ref{proposition:almost-subadditivity-entropy}, we get
\begin{align}
G_1(\bPP{\tilde{X}^n\tilde{Y}^n}) \geq n G_1(\bPP{\tilde{X}_J
  \tilde{Y}_J}),
\label{e:function_superadd1}
\end{align}
where $J$ is distributed uniformly over $\{1, ...,n\}$.  For
$G_2(\bPP{\tilde{U}\tilde{V}\tilde{X}^n\tilde{Y}^n\tilde{F}^n})$,
since removing condition increases entropy, we have
\begin{align*}
- H(\tilde{X}^n|\tilde{Y}^n, \tilde U, \tilde V) -
H(\tilde{Y}^n|\tilde{X}^n, \tilde U, \tilde V) \geq
-nH(\tilde{X}_J|\tilde{Y}_J, \tilde{U}_J, J, \tilde{V})
-nH(\tilde{Y}_J|\tilde{X}_J, \tilde{U}_J, J, \tilde{V})
\end{align*}
where $\tilde{U}_j =(\tilde U, \tilde{X}_j^-,
\tilde{Y}_j^+)$. Furthermore, noting that
\begin{align*}
&- 2 H(\tilde{Y}^n|\tilde{X}^n, \tilde U) + I(\tilde V\wedge
  \tilde{X}^n|\tilde{Y}^n, \tilde U) + H(\tilde{F}^n|\tilde{Y}^n,
  \tilde U) + H(\tilde{F}^n|\tilde{X}^n, \tilde U, \tilde V) \\ &=
  2[H(\tilde{X}^n| \tilde{Y}^n,\tilde U) - H(\tilde{Y}^n|\tilde{X}^n,
    \tilde U)] +[H(\tilde{Y}^n| \tilde{X}^n, \tilde U, \tilde V) -
    H(\tilde{X}^n| \tilde{Y}^n, \tilde U, \tilde V)] \\&\quad -
  H(\tilde{X}^n| \tilde{Y}^n,\tilde U) - H(\tilde{Y}^n| \tilde{X}^n,
  \tilde U, \tilde V) + H(\tilde{F}^n|\tilde{Y}^n, \tilde U) +
  H(\tilde{F}^n|\tilde{X}^n, \tilde U, \tilde V) \\ &=
  2[H(\tilde{X}^n| \tilde{Y}^n,\tilde U) - H(\tilde{Y}^n|\tilde{X}^n,
    \tilde U)] +[H(\tilde{Y}^n| \tilde{X}^n, \tilde U, \tilde V) -
    H(\tilde{X}^n| \tilde{Y}^n, \tilde U, \tilde V)] \\&\quad -
  H(\tilde{X}^n, \tilde{F}^n | \tilde{Y}^n,\tilde U) - H(\tilde{Y}^n,
  \tilde{F}^n | \tilde{X}^n, \tilde U, \tilde V) +
  H(\tilde{F}^n|\tilde{Y}^n, \tilde U) + H(\tilde{F}^n|\tilde{X}^n,
  \tilde U, \tilde V) \\ &= 2[H(\tilde{X}^n| \tilde U) -
    H(\tilde{Y}^n| \tilde U)] + [H(\tilde{Y}^n| \tilde U, \tilde V)-
    H(\tilde{X}^n| \tilde U, \tilde V)] \\ &\quad -
  H(\tilde{X}^n|\tilde{Y}^n,\tilde{F}^n,\tilde{U}) - H(\tilde{Y}^n|
  \tilde{X}^n, \tilde{F}^n, \tilde U, \tilde V),
\end{align*}
where we used the fact that $\tilde{F}^n$ is function of
$(\tX^n,\tY^n)$ to append $\tF^n$ in the second equality. Thus, by
using~\eqref{e:CKM_inequality} twice, we obtain
\begin{align*}
&- 2 H(\tilde{Y}^n|\tilde{X}^n, \tilde U) + I(\tilde V\wedge
  \tilde{X}^n|\tilde{Y}^n, \tilde U) + H(\tilde{F}^n|\tilde{Y}^n,
  \tilde U) + H(\tilde{F}^n|\tilde{X}^n, \tilde U, \tilde V) \\ &\geq
  2n(H(\tilde{X}_J| \tilde U_J,J) - H(\tilde{Y}_J| \tilde U_J,J)) +
  n(H(\tilde{Y}_J| \tilde U_J, J, \tilde V)- H(\tilde{X}_J| \tilde
  U_J, J, \tilde V)) \\ &\quad - n
  H(\tilde{X}_J|\tilde{Y}_J,\tilde{F}_J,\tilde{U}_J, J ) -n
  H(\tilde{Y}_J| \tilde{X}_J, \tilde{F}_J, \tilde U_J, J, \tilde V)
  \\ &= - 2n H(\tilde{Y}_J|\tilde{X}_J, \tilde U_J, J) + nI(\tilde
  V\wedge \tilde{X}_J|\tilde{Y}_J, \tilde{U}_J, J) +
  nH(\tilde{F}_J|\tilde{Y}_J, \tilde{U}_J, J) + n
  H(\tilde{F}_J|\tilde{X}_J, \tilde{U}_J, J, \tilde V),
\end{align*}
where we used the fact $\tilde{F}_J = f(\tilde{X}_J,\tilde{Y}_J)$ to
remove the unnecessary $\tilde{F}_J$ in the previous identity. Upon
combining the bounds above, we obtain
\begin{align}
G_2(\bPP{\tilde{U}\tilde{V}\tilde{X}^n\tilde{Y}^n\tilde{F}^n}) \geq n
G_2(\bPP{(\tilde{U}_J, J) \tilde{V} \tX_J\tY_J \tF_J}).
 \label{e:function_superadd2}
\end{align}
Since \eqref{e:function_superadd1} and \eqref{e:function_superadd2}
hold for arbitrary $\bPP{\tU \tV \tX^n \tY^n}$, the proof is complete.
\qed

\section{Common Randomness Generation and Secret Key Agreement} \label{sec:CR-SK}

We now move to the closely related problems of common randomness
generation and secret key agreement. In these problems, an additional
challenge arises due to the presence of a total variation distance
constraint. We circumvent this difficulty by replacing the total
variation distance constraint by a constraint on log-likelihood; the
resulting set is used in our change of measure arguments. A similar
approach will be used later for the wiretap channel strong converse
where, too, the security constraint poses a similar challenge.

We begin with the common randomness problem and extend to the secret
key agreement case using the connection between the two problems.
Note that while the change of measure arguments presented here prove the
strong converse for secret key agreement with limited communication,
the strong converse for secret key agreement with unlimited
communication is available in \cite{TyaWat14ii}.  In fact, the
conditional independence testing bound of \cite{TyaWat14ii} yields
even the precise second-order term (cf.~\cite{HayTW14}); it is unclear
if our change of measure approach can do the same.

\subsection{Common Randomness Generation} \label{subsec:CR}

Consider a source $\bPP{XY}$ on a finite alphabet $\cX \times \cY$. An
($2$-round) interactive common randomness generation
protocol\footnote{For ease of presentation, we restrict to
  $2$-rounds. Our approach easily extends to higher (but fixed) number
  of rounds.} $\pi$ with input $(X^n,Y^n)$ consists of mappings
$\varphi_1:\cX^n \to \{0,1\}^{l_1}$ and $\varphi_2:\cY^n \times
\{0,1\}^{l_1} \to \{0,1\}^{l_2}$; the length $|\pi|$ of such a
protocol $\pi$ is $l_1 + l_2$. The random transcript of the protocol
is denoted by $\Pi = (\Pi_1,\Pi_2)$, where $\Pi_1 = \varphi_1(X^n)$
and $\Pi_2 = \varphi_2(Y^n, \Pi_1)$.

Given a protocol, a pair of random variables
$(K_1, K_2)$ taking 
values in a finite set $\cK$ constitute an $(\ep,\delta)$-CR
recoverable from $\pi$ if there exist $\psi_1:\cX^n \times
\{0,1\}^{l_2} \to \cK$ and $\psi_2:\cY^n \times \{0,1\}^{l_1} \to \cK$
such that $K_1 = \psi_1(X^n,\Pi_2)$, $K_2 = \psi_2(Y^n,\Pi_1)$, and
\begin{align}
\bPr{ K_1 \neq K_2 } &\le \ep, \label{eq:CR-error} \\ d(\bPP{K_1},
\bPP{\mathtt{unif}}) &\le \delta, \label{eq:CR-uniformity}
\end{align}  
where $\bPP{\mathtt{unif}}$ is the uniform distribution on $\cK$.  The
quantity $\log |\cK|$ denotes the length of the CR.

A rate pair $(R_\mathtt{c}, R_\mathtt{r})$ is
$(\ep,\delta)$-achievable if, for all $n$ sufficiently large, there
exists a protocol $\pi$ of length $|\pi| \le n R_\mathtt{c}$ that
recovers an $(\ep, \delta)$-CR of length $\log |\cK| \ge n
R_\mathtt{r}$. Let ${\cal R}_{\mathtt{CR}}(\ep,\delta|\bPP{XY})$ be
the closure of the set of all $(\ep,\delta)$-achievable rate
pairs. Define
\begin{align*}
{\cal R}_{\mathtt{CR}}(\bPP{XY}) := \bigcap_{0 < \ep,\delta <1} {\cal
  R}_{\mathtt{CR}}(\ep,\delta|\bPP{XY}).
\end{align*}
The following characterization of ${\cal R}_{\mathtt{CR}}(\bPP{XY})$
was given in \cite{AhlCsi98}:
\begin{align*}
{\cal R}_{\mathtt{CR}}(\bPP{XY}) &= \big\{ (R_\mathtt{c},R_\mathtt{r})
: \exists (U,V) \mbox{ s.t. } |\cU| \le |\cX|+1, |\cV| \le |\cX| |\cY|
+1, U \mc X \mc Y, V \mc (Y,U) \mc X \\ &~~~~~~~~~~~~ R_\mathtt{c} \ge
I(U,V \wedge X|Y) + I(U,V \wedge Y|X), R_\mathtt{r} \le I(U,V \wedge
X,Y) \big\}
\end{align*}
The set ${\cal R}_{\mathtt{CR}}(\bPP{XY})$ is closed and convex, and
it can be expressed alternatively using tangent lines as follows:
\begin{align*} 
{\cal R}_{\mathtt{CR}}(\bPP{XY}) = \bigcap_{\mu \ge 0} \big\{
(R_\mathtt{r}, R_\mathtt{c}) : R_\mathtt{r} - \mu R_\mathtt{c} \le
R_{\mathtt{CR}}^\mu(\bPP{XY}) \big\},
\end{align*}
where
\begin{align}
R_{\mathtt{CR}}^\mu(\bPP{XY}) &:= \max\big\{ I(U,V \wedge X,Y) -
\mu\big( I(U,V \wedge X|Y) + I(U,V \wedge Y|X) \big) : \nonumber
\\ &~~~~~~~~~~ \exists (U,V) \mbox{ s.t. } |\cU| \le |\cX|, |\cV| \le
|\cX| |\cY|, U \mc X \mc Y, V \mc (Y,U) \mc X \big\}.
 \label{eq:CR-supporting-line}
\end{align}
As before, we circumvent the Markov chain conditions by using the
following alternative form:
\begin{align} \label{eq:CR-variational-formula}
R_{\mathtt{CR}}^\mu(\bPP{XY}) = \inf_{\alpha > 0}
R_{\mathtt{CR}}^{\mu,\alpha}(\bPP{XY}),
\end{align}
where (see Remark~\ref{r:two_form})
\begin{align}
R_{\mathtt{CR}}^{\mu,\alpha}(\bPP{XY}) &:=
\max_{\bPP{\tilde{U}\tilde{V}\tilde{X}\tilde{Y}}} \big[
  I(\tilde{U},\tilde{V} \wedge \tilde{X},\tilde{Y}) - \mu \big(
  I(\tilde{U},\tilde{V} \wedge \tilde{X}|\tilde{Y}) +
  I(\tilde{U},\tilde{V} \wedge \tilde{Y}|\tilde{X}) \big) \nonumber
  \\ &~~~ - \alpha D(\bPP{\tU \tV \tX \tY} \| \bQQ{\tU\tV XY}) -
  D(\bPP{\tU\tV|\tX\tY} \| \bPP{\tU|\tX} \bPP{\tV|\tY\tU} |
  \bPP{\tX\tY}) - 2\mu D(\bPP{\tilde{X}\tilde{Y}} \| \bPP{XY}) \big]
\nonumber \\ &= \max_{\bPP{\tilde{U}\tilde{V}\tilde{X}\tilde{Y}}}
\big[ I(\tilde{U},\tilde{V} \wedge \tilde{X},\tilde{Y}) - \mu \big(
  I(\tilde{U},\tilde{V} \wedge \tilde{X}|\tilde{Y}) +
  I(\tilde{U},\tilde{V} \wedge \tilde{Y}|\tilde{X}) \big) \nonumber
  \\ &~~~ - (\alpha+2\mu) D(\bPP{\tilde{X}\tilde{Y}} \| \bPP{XY}) -
  (\alpha+1) \big( I(\tilde{U} \wedge \tilde{Y} | \tilde{X}) +
  I(\tilde{V} \wedge \tilde{X} | \tilde{Y},\tilde{U}) \big) \big],
\label{eq:CR-cost-form}
\end{align}
and $\bQQ{\tU \tV XY} = \bPP{\tV |\tU\tY} \bPP{\tU|\tX}\bPP{XY}$ is
the distribution induced from each $\bPP{\tU \tV\tX\tY}$.  The ranges
$\cU$ and $\cV$ of $\tU$ and $\tV$ can be restricted to $|\cU| \le
|\cX||\cY|$ and $|\cV|\le |\cX|^2 |\cY|^2$.

\begin{theorem} \label{thm:CR}
For every $n \in \mathbb{N}$, $\mu \ge 0$, and $\alpha > 0$, we have
\begin{align*}
R_{\mathtt{CR}}^{\mu,\alpha}(\mathrm{P}_{XY}^n) \le n
R_{\mathtt{CR}}^{\mu,\alpha}(\bPP{XY}).
\end{align*}
\end{theorem}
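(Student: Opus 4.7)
The plan is to mirror the proof of Theorem~\ref{theorem:WZ}, with the key difference that $R_{\mathtt{CR}}^{\mu,\alpha}$ is defined as a maximum, so we must establish \emph{sub}-additivity rather than super-additivity. Given any distribution $\bPP{\tU\tV\tX^n\tY^n}$, I would show that the corresponding value of the objective is at most $n\,R_{\mathtt{CR}}^{\mu,\alpha}(\bPP{XY})$. Following the WZ template, set $\tU_j=(\tU,\tX_j^-,\tY_j^+)$ and $\tV_j=(\tV,\tY_j^-)$, introduce a time-sharing variable $J$ uniform on $\{1,\ldots,n\}$ independent of everything else, and take the single-letter auxiliaries to be $\tU'=(\tU_J,J)$ and $\tV'=\tV_J$. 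Decompose $F(\bPP{\tU\tV\tX^n\tY^n})$ as $G_1+G_2$, where $G_1$ depends only on $\bPP{\tX^n\tY^n}$ and collects the divergence penalty together with the negative conditional entropies $-\mu\,H(\tX^n|\tY^n)$ and $-(\mu+\alpha+1)\,H(\tY^n|\tX^n)$, while $G_2$ collects the remaining terms involving the auxiliaries.

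For $G_1$, I would apply Proposition~\ref{proposition:almost-subadditivity-entropy} separately to each conditional entropy; the divergence coefficient $(\alpha+2\mu)$ in the objective is chosen to convert the $n$-letter conditional entropies into their single-letter counterparts, up to a residual involving $D(\bPP{\tX^n\tY^n}\|\bPP{X^nY^n})-n\,D(\bPP{\tX_J\tY_J}\|\bPP{XY})$ that must be absorbed in the $G_2$ analysis. For $G_2$, the crucial tool is the CKM identity~\eqref{e:CKM_inequality}, which renders $H(\tY^n|\tU)-H(\tX^n|\tU)$ exactly equal to $n[H(\tY_J|\tU')-H(\tX_J|\tU')]$; an analogous identity handles $H(\tX^n|\tU,\tV)-H(\tY^n|\tU,\tV)$. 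With the choice $\tV_j=(\tV,\tY_j^-)$, the term $H(\tX^n|\tU,\tV,\tY^n)$ single-letterizes exactly to $n\,H(\tX_J|\tU',\tV',\tY_J)$ by the chain rule, while $-H(\tY^n|\tU,\tV)$ is upper-bounded by $-n\,H(\tY_J|\tU',\tV')$ by inserting conditioning on $(\tX_j^-,\tY_j^+)$ inside each summand of its chain-rule expansion. The reward $I(\tU,\tV\wedge\tX^n,\tY^n)$ is handled analogously: sub-additivity gives $H(\tX^n,\tY^n)\le\sum_j H(\tX_j,\tY_j)$, while inserting conditioning gives $H(\tX^n,\tY^n|\tU,\tV)\ge\sum_j H(\tX_j,\tY_j|\tU_j,\tV_j)$.

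The hard part is the term $\mu\,H(\tY^n|\tU,\tV,\tX^n)$, which under a non-product joint distribution does not directly single-letterize to a quantity upper-bounded by $n\,H(\tY_J|\tU',\tV',\tX_J)$. To handle it I would use the algebraic identity $H(\tY|\tU,\tV,\tX)-H(\tX|\tU,\tV,\tY)=H(\tY|\tU,\tV)-H(\tX|\tU,\tV)$, which re-expresses this problematic term as a combination of quantities amenable to CKM and the conditioning-addition tricks already described. The principal technical difficulty throughout will be the careful bookkeeping of signs and coefficients so that all the residual divergence corrections cancel exactly: the specific coefficients $(\alpha+2\mu)$ on the divergence penalty and $(\alpha+1)$ on the Markov-cost terms in the definition of $R_{\mathtt{CR}}^{\mu,\alpha}$ are calibrated precisely for this cancellation. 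After this bookkeeping, we obtain $F(\bPP{\tU\tV\tX^n\tY^n})\le n\,F(\bPP{\tU'\tV'\tX_J\tY_J}) \le n\,R_{\mathtt{CR}}^{\mu,\alpha}(\bPP{XY})$, as required.
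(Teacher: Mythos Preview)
Your overall plan—split the objective as $G_1+G_2$, apply Proposition~\ref{proposition:almost-subadditivity-entropy} to $G_1$, and use the identity~\eqref{e:CKM_inequality} to single-letterize $G_2$—is exactly the paper's approach. However, two concrete choices you make do not close.

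First, your $G_1$ carries coefficient $-(\mu+\alpha+1)$ on $H(\tY^n|\tX^n)$, so pairing with the divergence via Proposition~\ref{proposition:almost-subadditivity-entropy} needs coefficient $\mu+(\mu+\alpha+1)=2\mu+\alpha+1$ on $D$, while only $(\alpha+2\mu)$ is available. The residual you mention, $D(\bPP{\tX^n\tY^n}\|\mathrm{P}_{XY}^n)-nD(\bPP{\tX_J\tY_J}\|\bPP{XY})=nH(\tX_J,\tY_J)-H(\tX^n,\tY^n)$, is \emph{larger} than the slack $nI(J\wedge\tX_J,\tY_J)=nH(\tX_J,\tY_J)-\sum_j H(\tX_j,\tY_j)$ produced by your sub-additivity bound on $I(\tU,\tV\wedge\tX^n,\tY^n)$; the shortfall is the multi-information $\sum_j H(\tX_j,\tY_j)-H(\tX^n,\tY^n)\ge0$, and nothing else in your $G_2$ provides compensating slack. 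The paper avoids this entirely by first expanding
\[
I(\tU,\tV\wedge\tX^n,\tY^n)=I(\tU\wedge\tX^n)+I(\tV\wedge\tY^n|\tU)+I(\tU\wedge\tY^n|\tX^n)+I(\tV\wedge\tX^n|\tY^n,\tU),
\]
so that the last two terms cancel the ``$+1$'' in the $-(\alpha+1)$ penalties; what survives is $-\alpha$ times those mutual informations, and then the coefficient on $H(\tY^n|\tX^n)$ in $G_1$ drops to $-(\alpha+\mu)$, matching the available $(\alpha+2\mu)=\mu+(\alpha+\mu)$ exactly with no residual.

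Second, your auxiliary $\tV_j=(\tV,\tY_j^-)$ creates a mismatch: applying~\eqref{e:CKM_inequality} to $H(\tX^n|\tU,\tV)-H(\tY^n|\tU,\tV)$ yields conditioning on $(\tU_j,\tV)$, not $(\tU_j,\tV_j)$, and adding $\tY_j^-$ moves the two entropies in opposite directions so the difference is not monotone. The paper simply keeps $\tV$ itself as the second auxiliary (so $\tU'=(\tU_J,J)$, $\tV'=\tV$); every term in $G_2$ then single-letterizes cleanly via~\eqref{e:CKM_inequality} or by dropping conditioning, with no incompatibility.
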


\begin{corollary} \label{cor:CR}
For every $0 < \ep,\delta < 1$ with $\ep + \delta < 1$, we have ${\cal
  R}_{\mathtt{CR}}(\ep,\delta|\bPP{XY}) = {\cal
  R}_{\mathtt{CR}}(\bPP{XY})$.
\end{corollary}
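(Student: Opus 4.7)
The strategy mirrors the proof of Corollary~\ref{corollary:WZ}: we identify an event $\cB$ on $(X^n, Y^n)$ with nontrivial probability and change measure to the conditional distribution on $\cB$. The new ingredient here is the total variation constraint \eqref{eq:CR-uniformity}, which is not directly an event on $(X^n, Y^n)$. Following the random-number-generation idea sketched in the introduction, we enforce it indirectly by including a log-likelihood condition on the realized CR value.

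Given a protocol $\pi$ of length $|\pi| \le n R_{\mathtt c}$ recovering an $(\ep, \delta)$-CR $(K_1, K_2)$ of length $\log|\cK| \ge n R_{\mathtt r}$, pick $\lambda > 1$ and set
\[
\cB = \big\{ (x^n, y^n) : K_1 = K_2,\; \bP{K_1}{K_1} \le \lambda/|\cK| \big\},
\]
where each $K_i$ is evaluated at the input $(x^n, y^n)$. The first condition has probability at least $1 - \ep$ by \eqref{eq:CR-error}; the second has probability at least $1 - \delta\lambda/(\lambda - 1)$, using the bound $\bP{K_1}{k} \le 1/|\cK| + \delta$ for every $k$ (a consequence of \eqref{eq:CR-uniformity}) applied to the ``heavy'' set. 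Since $\ep + \delta < 1$, we can choose $\lambda > (1-\ep)/(1 - \ep - \delta)$ so that $\bPP{X^nY^n}(\cB) \ge \eta$ for a constant $\eta = \eta(\ep, \delta) > 0$. Define $\bPP{\tilde X^n \tilde Y^n}$ by conditioning $\mathrm{P}_{XY}^n$ on $\cB$, whence $D(\bPP{\tilde X^n \tilde Y^n} \| \mathrm{P}_{XY}^n) \le \log(1/\eta)$.

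Let $\tilde \Pi = (\tilde \Pi_1, \tilde \Pi_2)$ be the transcript produced by running the protocol on $(\tilde X^n, \tilde Y^n)$ and $\tilde K = \psi_1(\tilde X^n, \tilde \Pi_2) = \psi_2(\tilde Y^n, \tilde \Pi_1)$ the now error-free CR. The log-likelihood condition in $\cB$ gives $\bP{\tilde K}{k} \le \lambda/(\eta|\cK|)$ for every $k$, whence $\log|\cK| \le H_{\min}(\tilde K) + \log(\lambda/\eta) \le H(\tilde K) + c_1$ for a constant $c_1 = c_1(\ep, \delta)$. Set $\tilde U = \tilde \Pi_1$ and $\tilde V = (\tilde \Pi_2, \tilde K)$. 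The functional structure of the protocol is preserved under conditioning, so $\tilde U$ is a function of $\tilde X^n$ and $\tilde V$ is a function of $(\tilde Y^n, \tilde U)$; the Markov relations $\tilde U \mc \tilde X^n \mc \tilde Y^n$ and $\tilde V \mc (\tilde Y^n, \tilde U) \mc \tilde X^n$ therefore hold under the changed measure. Since $\tilde K$ is a function of $(\tilde X^n, \tilde Y^n)$, $H(\tilde K) = I(\tilde K \wedge \tilde X^n \tilde Y^n) \le I(\tilde U, \tilde V \wedge \tilde X^n \tilde Y^n)$. For the communication cost, the standard decomposition of the interactive transcript yields $|\pi| \ge H(\tilde \Pi) \ge I(\tilde U, \tilde V \wedge \tilde X^n | \tilde Y^n) + I(\tilde U, \tilde V \wedge \tilde Y^n | \tilde X^n)$.

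Combining these bounds with the vanishing of the Markov penalties in \eqref{eq:CR-cost-form} for our $(\tilde U, \tilde V)$ and with $D(\bPP{\tilde X^n \tilde Y^n} \| \mathrm{P}_{XY}^n) \le \log(1/\eta)$, we obtain
\[
n(R_{\mathtt r} - \mu R_{\mathtt c}) \le \log|\cK| - \mu|\pi| \le R_{\mathtt{CR}}^{\mu,\alpha}(\mathrm{P}_{XY}^n) + (\alpha + 2\mu)\log(1/\eta) + c_1 \le n R_{\mathtt{CR}}^{\mu,\alpha}(\bPP{XY}) + C,
\]
where the last inequality uses Theorem~\ref{thm:CR} and $C = C(\ep, \delta, \mu, \alpha)$ does not depend on $n$. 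Dividing by $n$, letting $n \to \infty$, and then invoking \eqref{eq:CR-variational-formula} via $\inf_{\alpha > 0}$ gives $R_{\mathtt r} - \mu R_{\mathtt c} \le R_{\mathtt{CR}}^{\mu}(\bPP{XY})$ for every $\mu \ge 0$, whence $(R_{\mathtt r}, R_{\mathtt c}) \in {\cal R}_{\mathtt{CR}}(\bPP{XY})$. The principal obstacle is the construction of $\cB$: we must simultaneously accommodate the error constraint \eqref{eq:CR-error} and the TV-distance constraint \eqref{eq:CR-uniformity} so that $\bPP{X^nY^n}(\cB)$ remains bounded below by a positive constant under the sharp hypothesis $\ep + \delta < 1$ and so that the log-likelihood threshold provides enough min-entropy to recover the full $\log|\cK|$ bits up to a blocklength-independent additive constant.
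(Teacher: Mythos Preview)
Your proposal follows essentially the same route as the paper's proof: condition on the event where both $K_1=K_2$ and $K_1$ lands in a high-entropy-density set, change measure to the conditional, use the functional structure of the protocol to make the Markov penalties in \eqref{eq:CR-cost-form} vanish, and invoke Theorem~\ref{thm:CR}. The paper parametrizes the log-likelihood threshold by $\gamma$ (with $2^{-\gamma}=(1-\ep-\delta)/2$) where you use $\lambda$, but the two are equivalent reparametrizations.

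One small gap: your stated justification for $\bPr{\bP{K_1}{K_1}>\lambda/|\cK|}\le \delta\lambda/(\lambda-1)$ is not right. The pointwise inequality ``$\bP{K_1}{k}\le 1/|\cK|+\delta$ for every $k$'' does not follow from \eqref{eq:CR-uniformity} in any useful form (the total variation bound controls the $\ell_1$ deviation, not individual coordinates), and even if it did, it would be far too weak to compare against $\lambda/|\cK|$ when $|\cK|$ is exponentially large. The correct argument, which the paper gives via the information-spectrum step \eqref{eq:proof-cor-CR-1}, is: for the heavy set $A=\{k:\bP{K_1}{k}>\lambda/|\cK|\}$ one has $\delta\ge d(\bPP{K_1},\unif)\ge \bP{K_1}{A}-|A|/|\cK|$ and $|A|/|\cK|<\bP{K_1}{A}/\lambda$, whence $\bP{K_1}{A}\le \delta\lambda/(\lambda-1)$. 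With this fix your bound is correct and the rest of the argument goes through unchanged.
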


\paragraph*{Proof of Corollary \ref{cor:CR}}
For a given protocol $\pi$ and $(\ep, \delta)$-CR $(K_1, K_2)$
satisfying \eqref{eq:CR-error} and \eqref{eq:CR-uniformity}, we first
replace the uniformity constraint \eqref{eq:CR-uniformity} with a
constraint on log-likelihood. Specifically, for a given $\gamma >0$,
which will be specified later, let
\begin{align}
\cT_\gamma := \bigg\{ k \in \cK : \log \frac{1}{\bP{K_1}{k}} \ge \log
|\cK| - \gamma \bigg\}.
\label{eq:typical-set-CR}
\end{align}
Then, by \eqref{eq:CR-uniformity} and the standard argument in the
information-spectrum methods (cf.~\cite[Lemma 2.1.2]{Han03}), we have
\begin{align}
\delta &\ge d(\bPP{K_1}, \bPP{\mathtt{unif}}) \nonumber \\ &\ge
\bP{K_1}{\cT_\gamma^c} - \bP{\mathtt{unif}}{\cT_\gamma^c} \nonumber
\\ &\ge \bP{K_1}{\cT_\gamma^c} -
2^{-\gamma}. \label{eq:proof-cor-CR-1}
\end{align}
We now define the set $\cD$ over which our CR generation protocol
behaves ideally. Let
\begin{align}
\cD := \big\{ (x^n,y^n) : \psi_1(x^n,\varphi_2(y^n,\varphi_1(x^n)))
\in \cT_\gamma,~ \psi_1(x^n,\varphi_2(y^n,\varphi_1(x^n))) =
\psi_2(y^n, \varphi_1(x^n)) \big\}.
\label{eq:correct-set-CR}
\end{align}
By \eqref{eq:CR-error} and \eqref{eq:proof-cor-CR-1}, for $\gamma =
\log \frac{2}{1-\ep-\delta}$, we have
\begin{align}
\mathrm{P}_{XY}^n(\cD) &\ge 1 - \bP{K_1}{\cT_\gamma^c} - \bPr{ K_1
  \neq K_2 } \nonumber \\ &\ge 1 - \ep - \delta - 2^{-\gamma}
\nonumber \\ &= \frac{1 - \ep - \delta}{2}. \label{eq:proof-cor-CR-2}
\end{align}
Denote by $\bPP{\tilde{X}^n \tilde{Y}^n}$ the pmf
\begin{align}
\bP{\tilde{X}^n \tilde{Y}^n}{x^n,y^n} :=
\frac{\mathrm{P}_{XY}^n(x^n,y^n) \indicator[(x^n,y^n) \in \cD]
}{\mathrm{P}_{XY}^n(\cD)}.
\label{eq:changed-measure-CR}
\end{align}
Then, \eqref{eq:proof-cor-CR-2} implies
\begin{align}
D(\bPP{\tilde{X}^n \tilde{Y}^n} \| \mathrm{P}_{XY}^n) &= \log
\frac{1}{\mathrm{P}_{XY}^n(\cD)} \nonumber \\ &\le \log
\frac{2}{1-\ep-\delta}. \label{eq:proof-cor-CR-3}
\end{align}
Consider an execution of protocol $\pi$ for input
$(\tilde{X}^n,\tilde{Y}^n) \sim \bPP{\tilde{X}^n \tilde{Y}^n}$. Set
$\tilde{\Pi}_1 = \varphi_1(\tilde{X}^n)$, $\tilde{\Pi}_2 =
\varphi_2(\tilde{Y}^n,\tilde{\Pi}_1)$, $\tilde{\Pi} =
(\tilde{\Pi}_1,\tilde{\Pi}_2)$, $\tilde{K}_1 =
\psi_1(\tilde{X}^n,\tilde{\Pi}_2)$, and $\tilde{K}_2 =
\psi_2(\tilde{Y}^n,\tilde{\Pi}_1)$.  Note that $\tilde{K}_1 =
\tilde{K}_2$ with probability $1$. Furthermore, since the support of
$\bPP{\tilde{K}_1}$ satisfies $\mathtt{supp}(\bPP{\tilde{K}_1})
\subseteq \cT_\gamma$, we have
\begin{align*}
\bP{\tilde{K}_1}{k} &= \frac{1}{\mathrm{P}_{XY}^n(\cD)}
\sum_{(x^n,y^n) \in \cD : \atop
  \psi_1(x^n,\varphi_2(y^n,\varphi_1(x^n))) = k}
\mathrm{P}_{XY}^n(x^n,y^n) \\ &\le
\frac{\bP{K_1}{k}}{\mathrm{P}_{XY}^n(\cD)} \\ &\le
\frac{2^\gamma}{\mathrm{P}_{XY}^n(\cD) |\cK|},
\end{align*}
for every $k \in \mathtt{supp}(\bPP{\tilde{K}_1})$. Thus, we get
\begin{align*}
H_{\min}(\bPP{\tilde{K}_1}) &= \min_{k \in
  \mathtt{supp}(\bPP{\tilde{K}_1})} \log \frac{1}{\bP{\tilde{K}_1}{k}}
\\ &\ge \log |\cK| - 2 \log \frac{2}{1-\ep-\delta},
\end{align*}
where we used \eqref{eq:proof-cor-CR-2} once again in the inequality.

By noting $H_{\min}(\bPP{\tilde{K}_1}) \le H(\tilde{K}_1)$, we have
\begin{align*}
 n (R_\mathtt{r}- \mu R_\mathtt{c}) - 2 \log \frac{2}{1-\ep-\delta}
 &\le \log |\cK| - \mu |\pi| - 2 \log \frac{2}{1-\ep-\delta} \\ &\le
 H(\tilde{K}_1) - \mu H(\tilde{\Pi}) \\ &\le
 H(\tilde{\Pi},\tilde{K}_1) - \mu \big( H(\tilde{\Pi}|\tilde{X}^n) +
 H(\tilde{\Pi}|\tilde{Y}^n) \big) \\ &= H(\tilde{\Pi},\tilde{K}_2) -
 \mu \big( H(\tilde{\Pi},\tilde{K}_2|\tilde{X}^n) +
 H(\tilde{\Pi},\tilde{K}_2|\tilde{Y}^n) \big) \\ &\le
 I(\tilde{\Pi},\tilde{K}_2 \wedge \tilde{X}^n,\tilde{Y}^n) - \mu \big(
 I(\tilde{\Pi},\tilde{K}_2 \wedge \tilde{X}^n | \tilde{Y}^n) +
 I(\tilde{\Pi}, \tilde{K}_2 \wedge \tilde{Y}^n| \tilde{X}^n) \big)
 \\ &\le I(\tilde{\Pi},\tilde{K}_2 \wedge \tilde{X}^n,\tilde{Y}^n) -
 \mu \big( I(\tilde{\Pi},\tilde{K}_2 \wedge \tilde{X}^n | \tilde{Y}^n)
 + I(\tilde{\Pi}, \tilde{K}_2 \wedge \tilde{Y}^n| \tilde{X}^n) \big)
 \\ &~~~ - (\alpha+1)\big( I(\tilde{\Pi}_1 \wedge
 \tilde{Y}^n|\tilde{X}^n) + I(\tilde{\Pi}_2, \tilde{K}_2 \wedge
 \tilde{X}^n|\tilde{Y}^n,\tilde{\Pi}_1) \big) \\ &~~~- (\alpha+2\mu)
 D(\bPP{\tilde{X}^n \tilde{Y}^n} \| \mathrm{P}_{XY}^n) + (\alpha+2\mu)
 \log \frac{2}{1-\ep-\delta} \\ &\le
 R_{\mathtt{CR}}^{\mu,\alpha}(\mathrm{P}_{XY}^n) + (\alpha+2\mu) \log
 \frac{2}{1-\ep-\delta},
\end{align*}
where we used a well-known property of interactive communication in
the third inequality (eg.~see \cite[Eq.~(3.2)]{NarayanTyagi16}); the
identity follows from the fact that $\tilde{K}_1$ and $\tilde{K}_2$
are recoverable perfectly from $(\tilde{X}^n,\tilde{\Pi})$ and
$(\tilde{Y}^n,\tilde{\Pi})$, respectively, and
$\tilde{K}_1=\tilde{K}_2$ with probability $1$; and we used the fact
that costs $I(\tilde{\Pi}_1 \wedge \tilde{Y}^n|\tilde{X}^n)$ and
$I(\tilde{\Pi}_2, \tilde{K}_2 \wedge
\tilde{X}^n|\tilde{Y}^n,\tilde{\Pi}_1)$ are both $0$ and
\eqref{eq:proof-cor-CR-3} in the fifth inequality; in the last
inequality, we regarded $\tilde{\Pi}_1$ and
$(\tilde{\Pi}_2,\tilde{K}_2)$ as $\tilde{U}$ and $\tilde{V}$,
respectively.  Finally, by applying Theorem \ref{thm:CR}, we have
\begin{align*}
R_\mathtt{r} - \mu R_\mathtt{c} \le
R_{\mathtt{CR}}^{\mu,\alpha}(\bPP{XY}) + \frac{(\alpha + 2\mu + 2)}{n}
\log \frac{2}{1-\ep-\delta},
\end{align*}
which together with \eqref{eq:CR-variational-formula} imply the strong
converse. \qed

\paragraph*{Proof of Theorem \ref{thm:CR}}
First note that we can expand
\begin{align*}
I(\tU,\tV \wedge \tX^n, \tY^n) = I(\tU \wedge \tX^n) + I(\tV \wedge
\tY^n | \tU) + I(\tU \wedge \tY^n | \tX^n) + I(\tV \wedge \tX^n |
\tY^n, \tU).
\end{align*}
Then, by setting
\begin{align*}
G_1(\bPP{\tX^n \tY^n}) := H(\tX^n) - \mu \big[ H(\tX^n | \tY^n) +
  D(\bPP{\tX^n \tY^n} \| \mathrm{P}_{XY}^n) \big] - (\alpha+\mu) \big[
  H(\tY^n | \tX^n) + D(\bPP{\tX^n \tY^n} \| \mathrm{P}_{XY}^n) \big]
\end{align*}
and
\begin{align*}
G_2(\bPP{\tU \tV \tX^n \tY^n}) &:= - H(\tX^n | \tU) + I(\tV \wedge
\tY^n | \tU) + \mu \big( H(\tX^n | \tY^n,\tU, \tV) + H(\tY^n | \tX^n,
\tU, \tV) \big) \\ &~~~~~~~~~~~~~~~ + \alpha \big( H(\tY^n | \tX^n,
\tU) - I(\tV \wedge \tX^n | \tY^n, \tU) \big)
\end{align*}
for given $\bPP{\tilde{U}\tilde{V}\tilde{X}^n\tilde{Y}^n}$, we can
write
\begin{align*}
R_{\mathtt{CR}}^{\mu,\alpha}(\mathrm{P}_{XY}^n) =
\max_{\bPP{\tilde{U}\tilde{V}\tilde{X}^n\tilde{Y}^n}} \big[
  G_1(\bPP{\tX^n \tY^n}) + G_2(\bPP{\tU \tV \tX^n \tY^n}) \big].
\end{align*}
Fix arbitrary $\bPP{\tU \tV \tX^n \tY^n}$.  By noting $H(\tX^n) \le n
H(\tX_J)$ and by using Proposition
\ref{proposition:almost-subadditivity-entropy}, we get
\begin{align} \label{eq:proof-the-CR-1}
G_1(\bPP{\tX^n \tY^n}) \le n G_1(\bPP{\tX_J \tY_J}),
\end{align}
where $J$ is distributed uniformly over $\{1,\ldots,n\}$. For
$G_2(\bPP{\tU \tV\tX^n \tY^n})$, by using~\eqref{e:CKM_inequality}, we
have
\begin{align*}
- H(\tX^n | \tU) + I(\tV \wedge \tY^n | \tU) &= H(\tY^n | \tU) -
H(\tX^n |\tU) - H(\tY^n | \tU, \tV) \\ &\le n \big[ H(\tY_J | \tU_J,J)
  - H(\tX_J | \tU_J, J) - H(\tY_J | \tU_J,J, \tV) \big],
\end{align*}
where $\tU_j = (\tU,\tX_j^-, \tY_j^+)$. Also,
\begin{align*}
H(\tX^n | \tY^n,\tU, \tV) + H(\tY^n | \tX^n, \tU, \tV) \le n \big(
H(\tX_J | \tY_J, \tU_J, J,\tV) + H(\tY_J | \tX_J, \tU_J, J, \tV)
\big).
\end{align*}
Furthermore, by using~\eqref{e:CKM_inequality} once more, we obtain
\begin{align*}
H(\tY^n | \tX^n, \tU) - I(\tV \wedge \tX^n | \tY^n, \tU) &= H(\tY^n |
\tU) - H(\tX^n| \tU) + H(\tX^n | \tY^n,\tU,\tV) \\ &\le n \big(
H(\tY_J | \tU_J,J) - H(\tX_J | \tU_J,J) + H(\tX_J | \tY_J, \tU_J,J,
\tV) \big) \\ &= n \big( H(\tY_J | \tX_J, \tU_J,J) - I(\tV \wedge
\tX_J | \tY_J, \tU_J,J) \big).
\end{align*}
Upon combining the bounds above, we obtain
\begin{align} \label{eq:proof-the-CR-2}
G_2(\bPP{\tU \tV \tX^n \tY^n}) \le n G_2(\bPP{\tU_J J \tV \tX_J
  \tY_J}).
\end{align}
Since \eqref{eq:proof-the-CR-1} and \eqref{eq:proof-the-CR-2} hold for
arbitrary $\bPP{\tU\tV \tX^n \tY^n}$, the proof is complete. \qed

\begin{remark}
When randomization is allowed, the achievable region is given by
\begin{align*}
\tilde{{\cal R}}_{\mathtt{CR}}(\bPP{XY}) = \big\{ (R_\mathtt{c},
R_\mathtt{r}) : \exists\, t \ge 0 \mbox{ s.t. } (R_\mathtt{c}-t,
R_\mathtt{r} -t ) \in {\cal R}_{\mathtt{CR}}(\bPP{XY}) \big\}.
\end{align*}
We can extend the proof above to randomized protocols easily by
appending two independent i.i.d. sources $A^n$ and $B^n$ (taking
values in sufficiently large alphabets $\cA$ and $\cB$) to $X^n$ and
$Y^n$, respectively. By Corollary \ref{cor:CR} we obtain ${\cal
  R}_{\mathtt{CR}}(\ep,\delta| \bPP{XAYB}) = {\cal
  R}_{\mathtt{CR}}(\bPP{XAYB})$. Also, noting that
(cf.~\cite{AhlCsi98})
\begin{align*}
\bigcup_{\bPP{AB}} {\cal R}_{\mathtt{CR}}(\bPP{XAYB}) = \tilde{{\cal
    R}}_{\mathtt{CR}}(\bPP{XY}),
\end{align*}
where the union is taken over all distributions such that $A$ and $B$
are independent, we have the strong converse even with randomized
protocols. \textchange{A similar approach has been pursued in~\cite[proof of Theorem
      III.2]{SudanTyagiWatanabe19} to handle randomization.}
\end{remark}

\subsection{Secret Key Agreement} \label{subsec:SK}
Next, we consider the secret key agreement problem. The formulation
and analysis is very similar to the common randomness generation
problem; we only highlight the differences. Specifically, an $(\ep,
\delta)$-SK $(K_1,K_2)$ recoverable from a protocol $\pi$ is an $(\ep,
\delta)$-CR with the uniformity condition \eqref{eq:CR-uniformity}
replaced by the secrecy condition
\begin{align} \label{eq:SK-security}
d(\bPP{K_1 \Pi}, \bPP{\mathtt{unif}} \times \bPP{\Pi}) \le \delta.
\end{align}
An $(\ep, \delta)$-achievable secret key rate pair
$(R_\mathtt{c},R_\mathtt{s})$ and the rate regions ${\cal
  R}_{\mathtt{SK}}(\ep,\delta | \bPP{XY})$ and ${\cal
  R}_{\mathtt{SK}}(\bPP{XY})$ are defined exactly as before.  The
following characterization of ${\cal R}_{\mathtt{SK}}(\bPP{XY})$ was
given in \cite{Tya13}:
\begin{align*}
{\cal R}_{\mathtt{SK}}(\bPP{XY}) &= \big\{ (R_\mathtt{c},R_\mathtt{s})
: \exists (U,V) \mbox{ s.t. } |\cU| \le |\cX|+1, |\cV| \le |\cX| |\cY|
+1, U \mc X \mc Y, V \mc (Y,U) \mc X \\ &~~~~~~~~~~~~ R_\mathtt{c} \ge
I(U,V \wedge X|Y) + I(U,V \wedge Y|X), \\ &~~~~~~~~~~~~ R_\mathtt{s}
\le I(U,V\wedge X,Y) - I(U,V \wedge X|Y) - I(U,V \wedge Y|X) \big\}.
\end{align*}
Define $R_{\mathtt{SK}}^\mu(\bPP{XY})$ and
$R_{\mathtt{SK}}^{\mu,\alpha}(\bPP{XY})$ analogously to
\eqref{eq:CR-supporting-line} and \eqref{eq:CR-cost-form},
respectively. Then, it can be easily verified that
\begin{align*}
R_{\mathtt{SK}}^\mu(\bPP{XY}) &= R_{\mathtt{CR}}^{\mu+1}(\bPP{XY}),
\\ R_{\mathtt{SK}}^{\mu,\alpha}(\bPP{XY}) &=
R_{\mathtt{CR}}^{\mu+1,\alpha}(\bPP{XY}).
\end{align*}
Thus, Theorem \ref{thm:CR} can be rewritten as follows.
\begin{theorem} \label{thm:SK}
For every $n \in \mathbb{N}$, $\mu \ge 0$, and $\alpha > 0$, we have
\begin{align*}
R_{\mathtt{SK}}^{\mu,\alpha}(\mathrm{P}_{XY}^n) \le n
R_{\mathtt{SK}}^{\mu,\alpha}(\bPP{XY}).
\end{align*}
\end{theorem}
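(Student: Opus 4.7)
The plan is to obtain Theorem~\ref{thm:SK} as an essentially immediate corollary of Theorem~\ref{thm:CR}, leveraging the identity $R_{\mathtt{SK}}^{\mu,\alpha}(\bPP{XY}) = R_{\mathtt{CR}}^{\mu+1,\alpha}(\bPP{XY})$ noted just before the statement. So the main task is to verify this identity carefully and then invoke Theorem~\ref{thm:CR} with $\mu$ replaced by $\mu+1$.

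First, I would verify the algebraic reduction at the level of the rate expressions themselves. Writing the SK objective as $R_{\mathtt{s}} - \mu R_{\mathtt{c}}$, the secrecy component is $R_{\mathtt{s}} = I(U,V \wedge X,Y) - I(U,V \wedge X|Y) - I(U,V \wedge Y|X)$, so that
\begin{align*}
R_{\mathtt{s}} - \mu R_{\mathtt{c}} &= I(U,V \wedge X,Y) - (\mu+1)\bigl[I(U,V \wedge X|Y) + I(U,V \wedge Y|X)\bigr],
\end{align*}
which is precisely the $(\mu+1)$-penalized CR objective appearing in \eqref{eq:CR-supporting-line}. The same replacement $\mu \leftarrow \mu+1$ propagates transparently through the additional divergence and conditional mutual information penalty terms in \eqref{eq:CR-cost-form} to yield $R_{\mathtt{SK}}^{\mu,\alpha}(\bPP{XY}) = R_{\mathtt{CR}}^{\mu+1,\alpha}(\bPP{XY})$. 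Crucially, this identity is purely algebraic in the joint distribution, so it applies equally well to $\mathrm{P}_{XY}^n$ in place of $\bPP{XY}$.

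Next, invoking Theorem~\ref{thm:CR} with parameter $\mu+1$ instead of $\mu$ yields
\begin{align*}
R_{\mathtt{CR}}^{\mu+1,\alpha}(\mathrm{P}_{XY}^n) \le n R_{\mathtt{CR}}^{\mu+1,\alpha}(\bPP{XY}).
\end{align*}
Combining this with the two instances of the SK/CR identity (for $\mathrm{P}_{XY}^n$ and for $\bPP{XY}$) gives the desired super-additivity bound $R_{\mathtt{SK}}^{\mu,\alpha}(\mathrm{P}_{XY}^n) \le n R_{\mathtt{SK}}^{\mu,\alpha}(\bPP{XY})$.

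There is no substantive obstacle here beyond carefully checking that the SK/CR reduction holds not merely at the weak-converse level but also at the level of the variational $\alpha$-penalized quantities. In particular, the divergence penalty $\alpha D(\bPP{\tU \tV \tX \tY} \| \bQQ{\tU\tV XY})$ and the Markov-relaxation terms $I(\tU \wedge \tY|\tX)$, $I(\tV \wedge \tX|\tY,\tU)$ depend only on the joint distribution and not on $\mu$, so they match between the two problems, while the $\mu$-dependent part transforms cleanly via the $\mu \mapsto \mu+1$ substitution. Once that is in place, no new single-letterization argument is needed --- the heavy lifting done in the proof of Theorem~\ref{thm:CR} transfers directly.
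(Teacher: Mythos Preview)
Your proposal is correct and matches the paper's own approach exactly: the paper presents Theorem~\ref{thm:SK} as a direct restatement of Theorem~\ref{thm:CR} via the identity $R_{\mathtt{SK}}^{\mu,\alpha}=R_{\mathtt{CR}}^{\mu+1,\alpha}$, with no separate proof given.
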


Furthermore, by using Theorem \ref{thm:SK}, we have the following
strong converse theorem.
\begin{corollary} \label{cor:SK}
For every $0 < \ep, \delta < 1$ with $\ep + \delta < 1$, we have
${\cal R}_{\mathtt{SK}}(\ep,\delta | \bPP{XY}) = {\cal
  R}_{\mathtt{SK}}(\bPP{XY})$.
\end{corollary}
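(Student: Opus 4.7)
}
The plan is to follow the template of the proof of Corollary \ref{cor:CR}, with two modifications that handle the secrecy condition \eqref{eq:SK-security} in place of the uniformity condition \eqref{eq:CR-uniformity}. First, I would replace the log-likelihood set \eqref{eq:typical-set-CR} with a \emph{conditional} log-likelihood set that takes the transcript $\Pi$ into account, namely
\[
\cT_\gamma := \bigg\{ (k,\pi) \in \cK \times \{0,1\}^{l_1+l_2} : \log \frac{1}{\bP{K_1|\Pi}{k|\pi}} \ge \log |\cK| - \gamma \bigg\}.
\]
The standard information-spectrum argument applied to \eqref{eq:SK-security} then yields $\bPr{(K_1,\Pi) \notin \cT_\gamma} \le \delta + 2^{-\gamma}$. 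Taking $\gamma = \log\tfrac{2}{1-\ep-\delta}$, the ``good'' set
\[
\cD := \big\{ (x^n,y^n) : \psi_1(x^n,\varphi_2(y^n,\varphi_1(x^n))) = \psi_2(y^n,\varphi_1(x^n)),\; (\psi_1(x^n,\varphi_2(\cdot)),\pi(x^n,y^n)) \in \cT_\gamma \big\}
\]
has probability at least $(1-\ep-\delta)/2$ under $\mathrm{P}_{XY}^n$, and the changed measure $\bPP{\tX^n \tY^n}$ obtained by conditioning on $\cD$ satisfies $D(\bPP{\tX^n \tY^n} \| \mathrm{P}_{XY}^n) \le \log \tfrac{2}{1-\ep-\delta}$.

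Second, in place of the unconditional min-entropy bound used in the CR proof, I would derive a bound on the min-entropy of $\bPP{\tK_1|\tilde{\Pi}}$ \emph{conditioned on $\tilde{\Pi}$}. The same calculation as in Corollary~\ref{cor:CR}, applied to $\bP{\tK_1|\tilde{\Pi}}{k|\pi}$ instead of $\bP{\tK_1}{k}$, yields that for every $\pi$ in the support of $\bPP{\tilde{\Pi}}$ and every $k$ in the conditional support,
\[
\log \frac{1}{\bP{\tK_1|\tilde{\Pi}}{k|\pi}} \ge \log |\cK| - 2\log \frac{2}{1-\ep-\delta}.
\]
Consequently $\log |\cK| \le H(\tK_1|\tilde{\Pi}) + 2\log\tfrac{2}{1-\ep-\delta}$.

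With these two ingredients, the chain of inequalities becomes
\begin{align*}
n(R_\mathtt{s} - \mu R_\mathtt{c}) - 2\log\tfrac{2}{1-\ep-\delta} &\le H(\tK_1|\tilde{\Pi}) - \mu H(\tilde{\Pi}) \\
&\le H(\tilde{\Pi},\tK_1) - (1+\mu) H(\tilde{\Pi}) \\
&\le H(\tilde{\Pi},\tK_2) - (1+\mu)\big( H(\tilde{\Pi}|\tX^n) + H(\tilde{\Pi}|\tY^n) \big),
\end{align*}
using perfect agreement $\tK_1=\tK_2$ under the tilted measure and the standard interactive-communication bound $H(\tilde{\Pi}) \le H(\tilde{\Pi}|\tX^n) + H(\tilde{\Pi}|\tY^n)$. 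The rest proceeds exactly as in Corollary \ref{cor:CR}, identifying $\tU = \tilde{\Pi}_1$, $\tV = (\tilde{\Pi}_2,\tK_2)$, adding the zero costs $I(\tilde{\Pi}_1 \wedge \tY^n|\tX^n)$ and $I(\tilde{\Pi}_2,\tK_2 \wedge \tX^n|\tY^n,\tilde{\Pi}_1)$, and absorbing the KL-divergence bound \`a la \eqref{eq:CR-cost-form} with the coefficient $\mu$ replaced by $\mu+1$. This yields
\[
R_\mathtt{s} - \mu R_\mathtt{c} \le R_{\mathtt{SK}}^{\mu,\alpha}(\mathrm{P}_{XY}^n)/n + \tfrac{(\alpha+2\mu+4)}{n}\log\tfrac{2}{1-\ep-\delta},
\]
and applying Theorem \ref{thm:SK} together with the variational identity $R_{\mathtt{SK}}^\mu = \inf_\alpha R_{\mathtt{SK}}^{\mu,\alpha}$ (inherited from \eqref{eq:CR-variational-formula}) completes the proof by letting $n\to\infty$ and then optimizing over $\mu\ge 0$. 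The main technical step to verify is the conditional min-entropy bound; once that is in place, the rest is mechanical reuse of the CR analysis.
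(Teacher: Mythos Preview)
Your proposal is correct and follows essentially the same approach as the paper's own proof: the paper also replaces $\cT_\gamma$ by the conditional set $\{(k,\tau): \log(1/\bP{K_1|\Pi}{k|\tau}) \ge \log|\cK|-\gamma\}$, derives the conditional min-entropy bound $H_{\min}(\bPP{\tK_1\tilde\Pi}|\bPP{\tilde\Pi}) \ge \log|\cK| - 2\log\tfrac{2}{1-\ep-\delta}$, and then uses $H(\tK_1|\tilde\Pi)-\mu H(\tilde\Pi)=H(\tilde\Pi,\tK_1)-(\mu+1)H(\tilde\Pi)$ together with $R_{\mathtt{SK}}^{\mu,\alpha}=R_{\mathtt{CR}}^{\mu+1,\alpha}$ and Theorem~\ref{thm:SK}. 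Your final constant $(\alpha+2\mu+4)$ agrees with the paper's $2+(\alpha+2\mu+2)$.
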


\paragraph*{Proof of Corollary \ref{cor:SK}}

The proof is mostly the same as the proof of Corollary \ref{cor:CR};
we only highlight the modifications required.  Instead of the set
defined by \eqref{eq:typical-set-CR}, we consider
\begin{align*}
\cT_\gamma := \bigg\{ (k,\tau) : \log \frac{1}{\bP{K_1|\Pi}{k|\tau}}
\ge \log |\cK| - \gamma \bigg\}.
\end{align*}
We can verify that
\begin{align*}
\delta \ge \bP{K_1 \Pi}{\cT_\gamma^c} - 2^{-\gamma}.
\end{align*}
In place of \eqref{eq:correct-set-CR}, define the set $\cD$ as
\begin{align*}
\cD &:= \big\{ (x^n,y^n) : (
\psi_1(x^n,\varphi_2(y^n,\varphi_1(x^n))), \varphi_1(x^n),
\varphi_2(y^n,\varphi_1(x^n))) \in \cT_\gamma, \\ &\hspace{5.5cm}
\psi_1(x^n,\varphi_2(y^n,\varphi_1(x^n))) = \psi_2(y^n,\varphi_1(x^n))
\big\}.
\end{align*}
Then, for the changed measure \eqref{eq:changed-measure-CR}, we
recover the bound \eqref{eq:proof-cor-CR-3}. Also,
\begin{align*}
H_{\min}(\bPP{\tilde{K}_1\tilde{\Pi}} | \bPP{\tilde{\Pi}}) &:=
\min_{(k,\tau) \in \mathtt{supp}(\bPP{\tilde{K}_1 \tilde{\Pi}})} \log
\frac{1}{\bP{\tilde{K}_1|\tilde{\Pi}}{k|\tau}} \\ &\ge \log |\cK| - 2
\log \frac{2}{1-\ep-\delta}.
\end{align*}
Therefore, upon noting $H_{\min}(\bPP{\tilde{K}_1\tilde{\Pi}} |
\bPP{\tilde{\Pi}}) \le H(\tilde{K}_1|\tilde{\Pi})$, we obtain
\begin{align*}
n(R_\mathtt{s} - \mu R_\mathtt{c}) - 2 \log \frac{2}{1-\ep-\delta}
&\le H(\tilde{K}_1|\tilde{\Pi}) - \mu H(\tilde{\Pi}) \\ &=
H(\tilde{\Pi},\tilde{K}_1) - (\mu+1) H(\tilde{\Pi}) \\ &\le
R_{\mathtt{SK}}^{\mu,\alpha}(\mathrm{P}_{XY}^n) +
(\alpha+2\mu+2)\log\frac{2}{1-\ep-\delta}.
\end{align*}
Finally, the strong converse follows from Theorem \ref{thm:SK}.  \qed

\section{Wiretap Channel} \label{sec:wiretap}

A wiretap channel code enables reliable transmission of a message over
a noisy channel while keeping it secure from an eavesdropper who can
see another noisy version of transmissions. Formally, given a discrete
memoryless channel (DMC) $W: \cX\to \cY \times \cZ$, an $(N, n, \ep,
\delta)$-wiretap code for $W$ consists of a (possibly randomized)
encoder $\varphi: \{1, ..., N\} \to \cX^n$ and a decoder $\psi: \cY^n
\to \{1, ..., N\}$ such that when a message $M$ distributed uniformly
over $\{1, ..., N\}$ is transmitted over the channel as $X^n =
\varphi(M)$, the estimate $\hat M = \psi(Y^n)$ has probability of
error satisfying $\mathbb{P}(\hat{M} \neq M)\leq \ep$ and leakage
$d(\bPP{MZ^n}, \bPP{M}\times \bPP{Z^n})\leq \delta$.

A rate $R>0$ is $(\ep, \delta)$-achievable if there exists an
$(\lfloor 2^{nR}\rfloor, n, \ep, \delta)$-wiretap code for all $n$
sufficiently large. The $(\ep, \delta)$-wiretap capacity
$C_\mathtt{s}(\ep, \delta|W)$ is the supremum over all
$(\ep,\delta)$-achievable rates. The wiretap capacity
$C_\mathtt{s}(W)$ is the infimum of $C_\mathtt{s}(\ep, \delta|W)$ over
all $\ep, \delta \in (0,1)$. The following characterization of
$C_\mathtt{s}(W)$ was derived in~\cite{CsiKor78}:
\[
C_\mathtt{s}(W) = \max_{\bPP{UXYZ}: \atop{\bPP{YZ|XU} = W}} \big[ I(U
  \wedge Y) - I(U\wedge Z) \big],
\]
where the cardinality $|\cU|$ of $U$ can be restricted to be $|\cU|
\le |\cX|$.  Using Proposition~\ref{p:remove_Markov2}, the expression
on the right above can be written alternatively as\footnote{The proof
  of \eqref{e:wiretap_variational} is very similar to that of other
  variational formulae such as \eqref{eq:WZ-variational-form} and is
  omitted.}
\begin{align}
C_\mathtt{s}(W) = \inf_{\alpha>0}C_\mathtt{s}^\alpha(W),
\label{e:wiretap_variational}
\end{align}
where
\[
C_\mathtt{s}^\alpha(W) = \max_{\bPP{\tU\tX\tY\tZ}} \big[ I(\tU \wedge
  \tY) - I(\tU\wedge \tZ) - \alpha D(\bPP{\tY\tZ|\tX\tU}\| W|
  \bPP{\tX\tU}) \big],
\]
where the cardinality $|\cU|$ of $U$ can be restricted to be $|\cU|
\le |\cX||\cY||\cZ|$.  The next theorem shows that the quantity
$C_\mathtt{s}^\alpha(W)$ satisfies the required sub-additivity
property.
\begin{theorem}\label{t:wiretap_subadditivity}
Consider a DMC $W:\cX \to \cY \times \cZ$ such that
$W(y,z|x)=W_1(y|x)W_2(z|x)$. For every $n\in \mN$ and $\alpha >0$,
\[
C_\mathtt{s}^{2\alpha}(W^n) \leq n C_\mathtt{s}^{\alpha}(W).
\]
\end{theorem}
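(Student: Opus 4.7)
The plan is to adapt the single-letterization strategy used in the proofs of Theorems~\ref{theorem:WZ} and~\ref{thm:CR}. Fix an arbitrary joint distribution $\bPP{\tU\tX^n\tY^n\tZ^n}$; the goal is to upper bound its objective value in $C_\mathtt{s}^{2\alpha}(W^n)$ by $nC_\mathtt{s}^\alpha(W)$. Introduce the time-sharing variable $J\sim\mathrm{Unif}\{1,\ldots,n\}$ independent of the rest and, following the standard Csisz\'ar-K\"orner wiretap construction, set $\tU_i:=(\tU,\tY_i^-,\tZ_i^+)$. For every $i$, the marginal $\bPP{\tU_i\tX_i\tY_i\tZ_i}$ is an admissible joint distribution in the definition of $C_\mathtt{s}^\alpha(W)$, so
\begin{align*}
I(\tU_i\wedge\tY_i)-I(\tU_i\wedge\tZ_i)-\alpha D(\bPP{\tY_i\tZ_i|\tX_i\tU_i}\|W|\bPP{\tX_i\tU_i})\le C_\mathtt{s}^\alpha(W).
\end{align*}
Summing this per-letter bound over $i$ supplies the reservoir into which the $n$-letter objective must be reduced.

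For the mutual-information difference, I would apply the Csisz\'ar-K\"orner identity~\eqref{e:CKM_inequality} once unconditionally, with $(\tY^n,\tZ^n)$ playing the roles of $(X^n,Y^n)$, and once conditionally on $\tU$; subtracting yields the exact identity
\begin{align*}
I(\tU\wedge\tY^n)-I(\tU\wedge\tZ^n)=\sum_{i=1}^n\bigl[I(\tU\wedge\tY_i\,|\,\tY_i^-,\tZ_i^+)-I(\tU\wedge\tZ_i\,|\,\tY_i^-,\tZ_i^+)\bigr].
\end{align*}
Each summand rewrites as $I(\tU_i\wedge\tY_i)-I(\tU_i\wedge\tZ_i)$ plus a cross term $I(\tY_i^-,\tZ_i^+\wedge\tZ_i)-I(\tY_i^-,\tZ_i^+\wedge\tY_i)$, and a Csisz\'ar-sum manipulation collapses the cross terms summed over $i$ to the difference $D(\bPP{\tZ^n}\|\prod_i\bPP{\tZ_i})-D(\bPP{\tY^n}\|\prod_i\bPP{\tY_i})$, measuring the non-product structure of the tilted marginals.

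For the divergence penalty, the key is the product structure $W=W_1W_2$, which permits the chain rule to be applied in two different orders—peeling off $\tY$ first, or peeling off $\tZ$ first. The factor $2$ in $2\alpha$ is precisely the budget that lets both decompositions be used simultaneously: the $\tY$-component is chain-ruled forward so that the natural per-letter conditioning contains $\tY_i^-$, while the $\tZ$-component is chain-ruled backward so that the natural per-letter conditioning contains $\tZ_i^+$. Marginalizing the extraneous $\tX^{n\setminus i}$ out of each per-letter term by convexity of $D(\cdot\|W)$ in its first argument then aligns both per-letter divergences with $\tU_i=(\tU,\tY_i^-,\tZ_i^+)$, so that each is bounded below by the corresponding term $D(\bPP{\tY_i\tZ_i|\tX_i\tU_i}\|W|\bPP{\tX_i\tU_i})$ that appears in the per-letter bound.

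The principal obstacle is exactly this alignment. The Csisz\'ar-K\"orner identity naturally mixes past $\tY$ with future $\tZ$ in the auxiliary, but no single direction of the chain rule on the joint divergence $D(\bPP{\tY^n\tZ^n|\tX^n\tU}\|W^n|\bPP{\tX^n\tU})$ produces that specific combination. The product hypothesis $W=W_1W_2$ is essential—it lets the $\tY$- and $\tZ$-parts be chain-ruled in opposite directions—and the factor $2$ on $\alpha$ is what pays for running the two decompositions in parallel and for absorbing the Csisz\'ar-sum cross term from the mutual-information step. Once the alignment is in place, summing the per-letter bounds, substituting the Csisz\'ar-K\"orner identity, and taking the maximum over $\bPP{\tU\tX^n\tY^n\tZ^n}$ yields $C_\mathtt{s}^{2\alpha}(W^n)\le nC_\mathtt{s}^\alpha(W)$.
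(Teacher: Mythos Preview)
Your handling of the divergence penalty is essentially the paper's: use $W=W_1W_2$ to chain-rule the $\tY$-part forward and the $\tZ$-part backward, marginalize out $\tX^{n\setminus i}$ by convexity, and spend the factor $2$ to cover both directions. That part is fine. The gap is in the mutual-information step.

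You choose $\tU_i=(\tU,\tY_i^-,\tZ_i^+)$ as the \emph{unconditional} single-letter auxiliary and then try to bound the $n$-letter objective by $\sum_i\bigl[I(\tU_i\wedge\tY_i)-I(\tU_i\wedge\tZ_i)\bigr]$. As you note yourself, this differs from the exact Csisz\'ar--K\"orner sum $\sum_i\bigl[I(\tU\wedge\tY_i\mid V_i)-I(\tU\wedge\tZ_i\mid V_i)\bigr]$ by the cross term
\[
\sum_{i=1}^n\bigl[I(V_i\wedge\tZ_i)-I(V_i\wedge\tY_i)\bigr]
= D\Bigl(\bPP{\tZ^n}\,\Big\|\,\textstyle\prod_i\bPP{\tZ_i}\Bigr)
- D\Bigl(\bPP{\tY^n}\,\Big\|\,\textstyle\prod_i\bPP{\tY_i}\Bigr),
\]
which is $\alpha$-free and can be strictly positive (take $\bPP{\tY^n\tZ^n|\tX^n\tU}=W^n$ with an input $\tX^n$ whose $\tZ$-marginal is more correlated across time than its $\tY$-marginal). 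The $2\alpha$ budget is already fully spent aligning the two divergence chains, so there is nothing left to absorb this term; for small $\alpha$ your route cannot close. The paper sidesteps the issue entirely: it keeps the \emph{conditional} form $I(\tU\wedge\tY_J\mid V_J,J)-I(\tU\wedge\tZ_J\mid V_J,J)$ (an exact identity, no cross term), shows $2D(\bPP{\tY^n\tZ^n|\tX^n\tU}\|W^n|\bPP{\tX^n\tU})\ge nD(\bPP{\tY_J\tZ_J|\tX_J\tU V_JJ}\|W|\bPP{\tX_J\tU V_JJ})$, and only then removes $(V_J,J)$ by taking the maximum over its realizations---so the single-letter auxiliary is $\tU$ under each conditional law, not $(\tU,V_J,J)$ as an averaged auxiliary.
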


As a corollary, we obtain the strong converse for wiretap
channel.\footnote{\textchange{We remark that we consider strong
    converse for the wiretap channel
   only when information leakage is measured by $d(\bPP{MZ^n},
   \bPP{M}\times \bPP{Z^n})$, and not for other measures of secrecy
   such as those considered in~\cite{BlochLaneman13}.}}
\begin{corollary}\label{c:wiretap_strong_converse}
For every $0 < \ep, \delta < 1$ with $\ep+\delta <1$, we have
$C_\mathtt{s}(\ep, \delta|W) = C_\mathtt{s}(W)$.
\end{corollary}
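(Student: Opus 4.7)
}

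The plan is to follow the change-of-measure template used throughout the paper. First, since both the error probability and the leakage $d(\bPP{MZ^n},\bPP{M}\bPP{Z^n})$ depend only on the marginal channels $W_1(y|x)=\sum_z W(y,z|x)$ and $W_2(z|x)=\sum_y W(y,z|x)$, and so does $C_\mathtt{s}(W)$, I replace $W$ by the factorized channel $W^*(y,z|x):=W_1(y|x)W_2(z|x)$ so that Theorem~\ref{t:wiretap_subadditivity} can be invoked. Mimicking the secret-key proof of Corollary~\ref{cor:SK}, I convert the total-variation leakage constraint into a log-likelihood constraint via information spectrum. For any $\gamma>0$, the set
\[
\cT_\gamma := \bigg\{ (m,z^n) : \log \frac{\bP{Z^n|M}{z^n|m}}{\bP{Z^n}{z^n}} \leq \gamma \bigg\}
\]
satisfies $\bPr{(M,Z^n)\notin\cT_\gamma} \leq \delta + 2^{-\gamma}$, so together with the error guarantee, the ``ideal'' set
\[
\cD := \{ (m,y^n,z^n) : \psi(y^n)=m,\ (m,z^n) \in \cT_\gamma \}
\]
has probability $p:=\bPr{(M,Y^n,Z^n)\in\cD} \geq 1-\ep-\delta-2^{-\gamma}$. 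The choice $\gamma=\log\frac{2}{1-\ep-\delta}$ makes $p\geq (1-\ep-\delta)/2>0$.

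Define $\bPP{\tilde M\tilde X^n\tilde Y^n\tilde Z^n}$ by conditioning the original joint law of the code on $\cD$. The change-of-measure cost is $D(\bPP{\tilde M\tilde X^n\tilde Y^n\tilde Z^n}\|\bPP{MX^nY^nZ^n})=\log(1/p)$, and by the chain rule the channel-deviation cost $D(\bPP{\tilde Y^n\tilde Z^n|\tilde X^n\tilde M} \| W^{*n} \mid \bPP{\tilde X^n\tilde M})$ is bounded by the same quantity. Under the tilde measure, the decoder is exact (so $H(\tilde M|\tilde Y^n)=0$), and on the support of $\bPP{\tilde M\tilde Z^n}$ the pointwise log-ratio $\log(\bP{Z^n|M}{z^n|m}/\bP{Z^n}{z^n})$ is at most $\gamma$. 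Since $\bPP{M}$ is uniform, $\bP{\tilde M}{m}\leq 1/(Np)$, and hence $\log N \leq H(\tilde M)+\log(1/p) = I(\tilde M\wedge\tilde Y^n)+\log(1/p)$. Splitting the mutual information as $[I(\tilde M\wedge\tilde Y^n) - I(\tilde M\wedge\tilde Z^n)] + I(\tilde M\wedge\tilde Z^n)$ and expanding
\[
I(\tilde M\wedge\tilde Z^n) = D(\bPP{\tilde M\tilde Z^n}\|\bPP{MZ^n}) - D(\bPP{\tilde M}\|\bPP{M}) - D(\bPP{\tilde Z^n}\|\bPP{Z^n}) + \bE{\bPP{\tilde M\tilde Z^n}}{\log\frac{\bP{MZ^n}{\tilde M,\tilde Z^n}}{\bP{M}{\tilde M}\bP{Z^n}{\tilde Z^n}}},
\]
the pointwise bound makes the last term at most $\gamma$, the marginal divergences are non-negative, and the first divergence is at most $\log(1/p)$, so $I(\tilde M\wedge\tilde Z^n)\leq\gamma+\log(1/p)$ and
\[
\log N \leq [I(\tilde M\wedge\tilde Y^n) - I(\tilde M\wedge\tilde Z^n)] + \gamma + 2\log(1/p).
\]

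Finally, viewing $\tilde M$ as the auxiliary $\tilde U$ in the variational formula together with the channel-deviation bound yields
\[
I(\tilde M\wedge\tilde Y^n) - I(\tilde M\wedge\tilde Z^n) \leq C_\mathtt{s}^{2\alpha}(W^{*n}) + 2\alpha\log(1/p),
\]
and Theorem~\ref{t:wiretap_subadditivity} further bounds the right-hand side by $n\,C_\mathtt{s}^\alpha(W^*)+2\alpha\log(1/p)$. Dividing by $n$, letting $n\to\infty$, and then taking $\inf_{\alpha>0}$ via \eqref{e:wiretap_variational}, I obtain $R\leq C_\mathtt{s}(W^*)=C_\mathtt{s}(W)$, which is the claimed strong converse. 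The main obstacle I anticipate is the secrecy bookkeeping in the second paragraph: the pointwise log-likelihood guarantee on $\cT_\gamma$ is with respect to the original $\bPP{MZ^n}$, and translating it to a bound on $I(\tilde M\wedge\tilde Z^n)$ under the changed measure requires the algebraic identity above in which two residual divergences are conveniently non-negative. Expurgating messages so that the leakage holds pointwise in $m$ is an optional cleanup, but the averaged information-spectrum argument sketched above already suffices.
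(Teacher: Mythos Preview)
Your argument is correct and is a genuinely different---and cleaner---route than the paper's. The paper first \emph{expurgates} messages to a subset $\cM'$ on which per-message error and leakage are bounded, then for each $m$ defines $\cA_m=\psi^{-1}(m)$, $\cB_m=\{z^n:\log(\bP{Z^n|M}{z^n|m}/\bP{Z^n}{z^n})\le\gamma\}$, and via a reverse Markov inequality a set $\cC_m\subseteq\cX^n$ of ``good'' inputs. The changed measure is then built per message by restricting $X^n$ to $\cC_m$ and $(Y^n,Z^n)$ to $\cA_m\times\cB_m$; the product form $W=W_1W_2$ is used \emph{structurally} to make $\bPP{\tZ^n|\tX^n\tU}$ factor nicely, and $I(\tU\wedge\tZ^n)$ is bounded via a pointwise ratio estimate $\bP{\tZ^n|\tU}{z^n|m}/\bP{Z^n|M}{z^n|m}\le(1-\sqrt\eta)^{-2}$. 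By contrast, you condition the full joint law on a single global event $\cD$, bound the channel-deviation cost directly by the chain rule for KL divergence, and control $I(\tilde M\wedge\tilde Z^n)$ through the additive identity that offloads the work onto nonnegative marginal divergences plus the pointwise $\cT_\gamma$ bound.

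What your approach buys is simplicity: you avoid the expurgation, the reverse Markov step, and the per-message construction, and you use the factorization $W=W_1W_2$ only where it is unavoidable (Theorem~\ref{t:wiretap_subadditivity}). What the paper's construction buys is a more explicit handle on the tilted channel: it shows directly that $\bPP{\tY^n\tZ^n|\tX^n\tU}$ is a product of restricted marginals, which yields a uniform bound $D(\bPP{\tY^n\tZ^n|\tX^n\tU}\|W^n|\bPP{\tX^n\tU})\le\log\frac1{1-\sqrt\eta}$ pointwise over the support, whereas your bound on this divergence is only in expectation via the chain rule. For the strong converse either suffices, and your version is closer in spirit to the secret-key proof of Corollary~\ref{cor:SK}.
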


\paragraph*{Proof of Corollary~\ref{c:wiretap_strong_converse}}
Consider an $(N, n, \ep, \delta)$-wiretap code with a randomized
encoder $\varphi$ and (deterministic) decoder $\psi$ . Note that
without loss of generality we may assume $W(y,z|x) = W_1(y|x)W_2(z|x)$
since the error and secrecy criterion, respectively, depend only on
the marginals $(X^n, Y^n)$ and $(X^n, Z^n)$. The first step in our
proof is to convert the average probability of error and secrecy
requirements to a worst-case version. Specifically, since
\begin{align*}
\ep + \delta & \geq \mathbb{P}(\hat{M} \neq M) + d(\bPP{M Z^n},
\bPP{M} \times \bPP{Z^n}) \\ &\textchange{=} \frac{1}{N} \sum_{m=1}^N \bigg[
  \mathbb{P}( \hat M \neq m | M=m) + d(\bPP{Z^n|M=m}, \bPP{Z^n})
  \bigg] ,
\end{align*}
there exists a subset $\cM^\prime$ of size $|\cM^\prime | \geq
(1-\ep-\delta)N/(1+\ep+\delta)$ such that for every message $m \in
\cM^\prime$,
\[
\mathbb{P}( \hat{M} \neq m | M=m) + d(\bPP{Z^n|M=m}, \bPP{Z^n}) \leq
\frac {1 + \ep+ \delta}2.
\]
For $m \in {\cal M}^\prime$, consider the sets
\[
\cA_m = \{y^n: \psi(y^n)=m\}
\]
and, for $\gamma>0$ specified later,
\[
\cB_m = \left\{z^n: \log \frac {\bP{Z^n|M}{z^n|m} }{\bP{Z^n}{z^n}}\leq
\gamma\right\}.
\]
The set $\cB_m$ denotes, roughly, the set of observations that do not
reveal much information to the wiretapper about the message $m$ -- the
wiretapper cannot distinguish reliably if the observation was
generated from $\bPP{Z^n|M=m}$ or from $\bPP{Z^n}$. By the standard
argument in the information-spectrum methods (cf.~\cite{Han03}), the
set $\cB_m$ satisfies
\[
\bP{Z^n}{\cB_m^c} \leq 2^{-\gamma}.
\]
Furthermore, from the definition of the total variation distance, we
further have
\begin{align*}
\bP{Z^n|M=m}{\cB_m^c} &\leq 2^{-\gamma} + d(\bPP{Z^n|M=m}, \bPP{Z^n})
\end{align*}
for every $m \in {\cal M}^\prime$.  Therefore, upon choosing
$2^{-\gamma} = (1-\ep-\delta)/4$, we have
\begin{align*}
\bPr{Y^n \in \cA_m, Z^n \in \cB_m | M=m} &\geq 1 -
\bP{Y^n|M=m}{\cA_m^c} - \bP{Z^n|M=m}{\cB_m^c} \\ &\geq
\frac{1-\ep-\delta}4
\end{align*}
for every $m \in {\cal M}^\prime$.  Denote $\eta = 1-
(1-\ep-\delta)/4$ and by $\cC_m$ the set of $x^n \in
\mathtt{supp}(\bPP{X^n|M=m})$ such that
\begin{align}
\bPr{Y^n \in \cA_m, Z^n \in \cB_m | X^n=x^n} \geq 1- \sqrt{\eta},
\label{e:m-set}
\end{align}
which satisfies
\begin{align}
\bPr{X^n \in \cC_m|M=m}\geq 1- \sqrt{\eta}
\label{e:x-set}
\end{align}
by the reverse Markov inequality.  We now define our modified random
variables for which the code is perfectly error-free and has a small
leakage of information to the wiretapper; however, unlike the original
random variables satisfying the Markov constraint $M \mc X^n \mc
(Y^n,Z^n)$, the modified random variables do not satisfy the Markov
constraint (see also Remark \ref{remark:wiretap-markov}).
Specifically, consider random variables $(\tU, \tX^n, \tY^n, \tZ^n)$
such that $\tU$ is uniformly distributed on ${\cal M}^\prime$, and
\begin{align*}
\bP{\tX^n|\tU}{x^n| m}&= \frac{\bP{X^n|M}{x^n|m}\indicator[x^n\in
    \cC_m]} {\bP{X^n|M}{\cC_m|m}}; \\ \bP{\tY^n\tZ^n|\tX^n\tU}{y^n,
  z^n|x^n, m}&= \frac{\bP{Y^nZ^n|X^n}{y^n,z^n|x^n}\indicator[y^n\in
    \cA_m, z^n\in \cB_m]} {\bP{Y^nZ^n|X^n}{\cA_m\times \cB_m|x^n}},
\quad \, \forall x^n \in \cC_m.
\end{align*}
Using the conditional independence assumption $W(y,z|x) =
W_1(y|x)W_2(z|x)$, we further get that
\begin{align*}
\bP{\tY^n\tZ^n|\tX^n\tU}{y^n, z^n|x^n, m}&=
\frac{\bP{Y^n|X^n}{y^n|x^n}\indicator[y^n\in \cA_m]}
     {\bP{Y^n|X^n}{\cA_m|x^n}} \cdot
     \frac{\bP{Z^n|X^n}{z^n|x^n}\indicator[z^n\in \cB_m]}
          {\bP{Z^n|X^n}{\cB_m|x^n}},
\end{align*}
whereby
\begin{align}
\bP{\tZ^n|\tX^n\tU}{z^n|x^n,m}&=
\frac{\bP{Z^n|X^n}{z^n|x^n}\indicator[z^n\in \cB_m]}
     {\bP{Z^n|X^n}{\cB_m|x^n}}.
\label{e:z-marginal}
\end{align}
Since $\tU = \psi(\tY^n)$ with probability $1$, we get
\begin{align}
\log N - \log \frac 2{1-\ep-\delta} \leq \log |{\cal M}^\prime| \leq
I(\tU \wedge \tY^n).
\label{e:rate_bound1}
\end{align}
To bound the leakage $I(\tU \wedge \tZ^n)$, note that
\begin{align*}
I(\tU \wedge \tZ^n) + D(\bPP{\tZ^n}\|\bPP{Z^n}) &= \mathbb{E}\bigg[
  \log \frac{\mathrm{P}_{\tZ^n|\tU}(\tZ^n|\tU)}{\mathrm{P}_{Z^n}(
    \tZ^n)} \bigg] \\ &\leq \max_{m \in {\cal M}^\prime, z^n \in
  \cB_m} \log \frac{\bP{\tZ^n|\tilde{U}}{z^n|m}}{\bP{Z^n}{ z^n}}
\\ &\leq \log \frac 4{1-\ep-\delta} + \max_{m \in {\cal M}^\prime, z^n
  \in \cB_m} \log
\frac{\bP{\tZ^n|\tilde{U}}{z^n|m}}{\bP{Z^n|M}{z^n|m}},
\end{align*}
where the previous inequality uses the definition of $\cB_m$ and
$\gamma = \log \frac{4}{1-\ep-\delta}$. For the second term on the
right-side above, for every $z^n \in \cB_m$ it holds that
\begin{align*}
\frac{\bP{\tZ^n|\tilde{U}}{z^n|m}}{\bP{Z^n|M}{z^n|m}} &=
\frac{\sum_{x^n\in \cC_m} \bP{\tX^n|\tilde{U}}{x^n|m}\bP{\tZ^n|\tX^n
    \tilde{U}}{z^n|x^n,m}} {\sum_{x^{\prime n}\in \cX^n}
  \bP{X^n|M}{x^{\prime n}|m}\bP{Z^n|X^n}{z^n|x^{\prime n}}} \\ &\leq
\frac{\sum_{x^n\in \cC_m} \bP{\tX^n|\tilde{U}}{x^n|m}\bP{\tZ^n|\tX^n
    \tilde{U}}{z^n|x^n,m}} {\sum_{x^{\prime n}\in \cC_m}
  \bP{X^n|M}{x^{\prime n}|m}\bP{Z^n|X^n}{z^n|x^{\prime n}}} \\ &=
\frac{\sum_{x^n\in \cC_m}\bP{X^n|M}{x^n|m}\bP{Z^n|X^n}{z^n|x^n} / \{
  \bP{X^n|M}{\cC_m|m}\bP{Z^n|X^n}{\cB_m|x^n} \}} {\sum_{x^{\prime
      n}\in \cC_m} \bP{X^n|M}{x^{\prime
      n}|m}\bP{Z^n|X^n}{z^n|x^{\prime n}}} \\ &\leq
\frac{1}{(1-\sqrt{\eta})^2},
\end{align*}
where the second equality uses \eqref{e:z-marginal} and the final
inequality is by \eqref{e:m-set} and \eqref{e:x-set}.  Using the
bounds above, we get
\[
I(\tU \wedge \tZ^n) \leq \log \frac 4{1-\ep-\delta} + 2 \log \frac
1{1-\sqrt{\eta}}.
\]
Combining this bound with \eqref{e:rate_bound1}, for every $\alpha>0$
and with
\[
\Delta(\ep, \delta)= 2 \log \frac 1 {1-\ep-\delta} + 2\log \frac 1 {1-
  \sqrt{1- (1-\ep-\delta)/4}} + 3,
\]
we get
\begin{align}
 \log N &\leq I(\tU \wedge \tY^n)-I(\tU \wedge \tZ^n) +\Delta(\ep,
 \delta) \nonumber \\ &\leq C_\mathtt{s}^{2\alpha}(W^n) + 2\alpha
 D(\bPP{\tY^n\tZ^n|\tX^n\tU}\| W^n| \bPP{\tX^n\tU}) +\Delta(\ep,
 \delta) \nonumber \\ &\leq nC_\mathtt{s}^{\alpha}(W) + 2\alpha
 D(\bPP{\tY^n\tZ^n|\tX^n\tU}\| W^n| \bPP{\tX^n\tU}) +\Delta(\ep,
 \delta),
\label{e:rate_bound2}
\end{align}
where the final bound uses Theorem~\ref{t:wiretap_subadditivity}. It
only remains to bound the divergence term on the right-side above.  To
that end, note
\begin{align*}
D(\bPP{\tY^n\tZ^n|\tX^n\tU}\| W^n| \bPP{\tX^n\tU}) &= \sum_{x^n,m}
\bP{\tilde{X}^n \tilde{U}}{x^n,m} \log \frac
1{\bP{Y^nZ^n|X^n}{\cA_m\times \cB_m|x^n}} \\ &\leq \log \frac
1{1-\sqrt{\eta}},
\end{align*}
where we have used the fact that support of $\bPP{\tX^n|\tU=m}$ is
$\cC_m$ and \eqref{e:m-set}.  This bound along with
\eqref{e:rate_bound2} yields
\begin{align*}
\log N \leq nC_{\alpha}(W) + 2 \log \frac 1 {1-\ep-\delta} +
(2\alpha+2)\log \frac 1 {1- \sqrt{1- (1-\ep-\delta)/4}} + 3,
\end{align*}
which yields the strong converse by \eqref{e:wiretap_variational}.
\qed

\begin{remark} \label{remark:wiretap-markov}
Unlike the standard choice of auxiliary random variable in the wiretap
channel, the random variables
$(\tilde{U},\tilde{X}^n,\tilde{Y}^n,\tilde{Z}^n)$ in the above proof
do not satisfy the Markov relation $\tU \mc \tX^n \mc (\tY^n, \tZ^n)$.
Instead, we have added the cost $D(\bPP{\tilde{Y}^n
  \tilde{Z}^n|\tilde{X}^n\tilde{U}} \| W^n | \bPP{\tilde{X}^n
  \tilde{U}})$.
\end{remark}

\paragraph*{Proof of Theorem~\ref{t:wiretap_subadditivity}}
For any distribution $\bPP{\tU\tX^n\tY^n\tZ^n}$, note first that (see
\cite[Lemma 17.12]{CsiKor11})
\begin{align}
I(\tU \wedge \tY^n) - I(\tU\wedge \tZ^n) = n[I(\tU \wedge \tY_J |
  V_J,J) - I(\tU \wedge \tZ_J | V_J,J)],
\label{e:MI_add}
\end{align}
where $J$ is distributed uniformly over $\{1, ..., n\}$ and $V_j =
(\tY_{j}^-, \tZ_j^+)$. Next, consider
\begin{align}
D(\bPP{\tY^n\tZ^n|\tX^n\tU}\| W^n| \bPP{\tX^n\tU}) &=
D(\bPP{\tY^n|\tX^n\tU}\| W_1^n| \bPP{\tX^n\tU})+
D(\bPP{\tZ^n|\tY^n\tX^n\tU}\| W_2^n| \bPP{\tY^n\tX^n\tU}).
\label{e:divergence_chain}
\end{align} 
The first term on the right is bounded below by
\begin{align*}
&D(\bPP{\tY^n|\tX^n\tU}\| W_1^n| \bPP{\tX^n\tU})-
  D(\bPP{\tZ^n|\tX^n\tU}\| W_2^n| \bPP{\tX^n\tU}) \\ &=
  \mathbb{E}\bigg[ \log
    \frac{W_2^n(\tZ^n|\tX^n)}{W_1^n(\tY^n|\tX^n)}\bigg] +
  H(\tZ^n|\tX^n, \tU) - H(\tY^n|\tX^n, \tU) \\ &= \sum_{j=1}^n \bigg[
    \mathbb{E}\bigg[ \log \frac{W_2(\tZ_j|\tX_j)}{W_1(\tY_j|\tX_j)}
      \bigg] + H(\tZ_j|\tX^n, \tU, V_j) - H(\tY_j|\tX^n, \tU,
    V_j)\bigg] \\ &= \sum_{j=1}^n \bigg[D(\bPP{\tY_j|\tX^n \tU V_j}\|
    W_1|\bPP{\tX^n\tU V_j}) - D(\bPP{\tZ_j|\tX^n \tU V_j}\| W_2 |
    \bPP{\tX^n\tU V_j}) \bigg],
\end{align*}
where the second equality follows from~\eqref{e:CKM_inequality}.  For
the second term on the right-side of \eqref{e:divergence_chain}, we
have
\begin{align*}
D(\bPP{\tZ^n|\tY^n\tX^n\tU} \| W_2^n| \bPP{\tY^n\tX^n\tU}) &
=\sum_{j=1}^n D(\bPP{\tZ_j|\tZ_j^+\tY^n\tX^n\tU}\|
W_2|\bPP{\tZ_j^+\tY^n\tX^n\tU}) \\ &\geq \sum_{j=1}^n
D(\bPP{\tZ_j|\tX^n \tU V_j}\| W_2|\bPP{\tX^n\tU V_j}),
\end{align*}
where the inequality uses the convexity of $D(\bPP{} \|\bQQ{})$ in
$(\bPP{},\bQQ{})$.  Using these bounds with
\eqref{e:divergence_chain}, it follows that
\begin{align}
D(\bPP{\tY^n\tZ^n|\tX^n\tU}\| W^n| \bPP{\tX^n\tU}) &\geq \sum_{j=1}^n
D(\bPP{\tY_j|\tX^n \tU V_j}\| W_1|\bPP{\tX^n\tU V_j}) \nonumber
\\ &\geq \sum_{j=1}^n D(\bPP{\tY_j|\tX_j\tU V_j}\| W_1|\bPP{\tX_j\tU
  V_j}) \nonumber \\ &= n D(\bPP{\tY_J|\tX_J \tU V_J
  J}\|W_1|\bPP{\tX_J\tU V_J J}).
\label{e:divergence_subadd1}
\end{align}
Also,
\begin{align}
D(\bPP{\tY^n\tZ^n|\tX^n\tU}\| W^n| \bPP{\tX^n\tU}) &\geq
D(\bPP{\tZ^n|\tY^n\tX^n\tU}\| W_2^n| \bPP{\tY^n\tX^n\tU}) \nonumber
\\ &= \sum_{j=1}^nD(\bPP{\tZ_j|\tZ_j^+\tY^n\tX^n\tU}\| W_2|
\bPP{\tZ_j^+\tY^n\tX^n\tU}) \nonumber \\ &\geq
\sum_{j=1}^nD(\bPP{\tZ_j|\tY_j\tX_j\tU V_j}\| W_2| \bPP{\tY_j\tX_j\tU
  V_j}) \nonumber \\ &= n D(\bPP{\tZ_J|\tY_J\tX_J\tU V_J J}\| W_2|
\bPP{\tY_J\tX_J\tU V_J J}).
\label{e:divergence_subadd2}
\end{align}
The bounds \eqref{e:divergence_subadd1} and
\eqref{e:divergence_subadd2} yield
\[
2D(\bPP{\tY^n\tZ^n|\tX^n\tU}\| W^n| \bPP{\tX^n\tU}) \geq n
D(\bPP{\tY_J\tZ_J|\tX_J\tU V_J J}\| W| \bPP{\tX_J\tU V_J J}).
\]
Consequently, we have
\begin{align*}
\lefteqn{ I(\tU \wedge \tY^n) - I(\tU\wedge \tZ^n) - 2 \alpha
  D(\bPP{\tY^n\tZ^n|\tX^n\tU}\| W^n| \bPP{\tX^n\tU}) } \\ &\le n \big[
  I(\tU \wedge \tY_J | V_J,J) - I(\tU \wedge \tZ_J | V_J,J) - \alpha
  D(\bPP{\tY_J\tZ_J|\tX_J\tU V_J J}\| W| \bPP{\tX_J\tU V_J J}) \big]
\\ &\le n C_\mathtt{s}^\alpha(W),
\end{align*}
where, in the last inequality, we removed $(V_J,J)$ by taking the
maximum over realizations of $(V_J,J)$.  Since
$\bPP{\tU\tX^n\tY^n\tZ^n}$ is arbitrary, the proof is completed.\qed

\section{Discussion}
Our proofs of strong converse have followed a common recipe where an
important step is to establish the super-additivity and
sub-additivity, respectively, of the lower and upper bounds involving
the changed measure. To facilitate this, we have used appropriately
crafted variational formulae for these bounds which allowed us to
establish the desired additivity properties. These results, Theorem
\ref{theorem:WZ}, Theorem \ref{theorem:function_computation}, Theorem
\ref{thm:CR}, Theorem \ref{thm:SK}, and Theorem
\ref{t:wiretap_subadditivity}, along with
Proposition~\ref{proposition:almost-subadditivity-entropy} may be of
independent interest.

We restricted our treatment to the case of random variables taking
finitely many values. But this assumption was used only to establish
the variational formulae \eqref{eq:WZ-variational-form},
\eqref{e:varitional_function}, \eqref{eq:CR-variational-formula}, and
\eqref{e:wiretap_variational}, and our results will hold whenever
these formulae can be established. In particular, a technical
difficulty in generalizing these formulae is to replace the use of
uniform continuity of the information quantities in our proofs with
suitable conditions. \textchange{We only need this in the neighborhood
  of product distributions; we have not pursued generalization in this
  direction in
  the current paper, but techniques we develop can be applied to more
  general distributions. Regarding continuous channels,
  the strong converse theorems were proved in \cite{FonTan16,FonTan17b,FonTan19} for the Gaussian multiple access channel, the Gaussian broadcast channel,
  and some class of Gaussian networks.}

\textchange{An intriguing direction of research is if the change of
  measure argument can be used to derive second-order converse
  bounds, namely extending the results of~\cite{ Hay09,PolPooVer10} to
  the multiterminal setting.  For
  centralized coding problems, such as the Gray-Wyner network, an
  application of the argument in \cite{GuEff09} (namely, the change of
  measure argument without introducing penalty terms) to each type
  class leads to the exact second-order converse bound
  (cf.~\cite{Wat17b, ZhoTanMot17}).  A difficulty for distributed
  coding problems is the evaluation of the variational formulae; to derive
  second-order bounds, we need to take the limit of block length $n$
  and the multiplier $\alpha$ simultaneously.  Recently, following-up
  on an early version of this paper, an evaluation
  method for the variational formulae was developed in \cite{Ooh19} which
  used the bound in \eqref{eq:final-bound-WZ} to derive a
  second-order converse bound for the Wyner-Ziv problem. } 

The strong converse claim considered in this paper is that the
capacity remains unchanged even if a constant error $0 < \varepsilon <
1$ is allowed. A stronger notion of strong converse, termed the
exponential strong converse or Arimoto converse, requires that the
error converges to $1$ exponentially rapidly when the rate exceeds the
capacity \cite{Ari73}.  In fact, our proofs give exponential strong
converses as well. For instance, in the lossy source coding with
side-information, by setting $\varepsilon=1-2^{-\xi n}$ in the final
bound \eqref{eq:final-bound-WZ} of the proof of Corollary
\ref{corollary:WZ}, we can show that any code with excess distortion
probability less than $1-2^{-\xi n}$ must satisfy
\begin{align} \label{eq:WZ-exponential-converse}
R + \mu D \ge R_{\mathtt{WZ}}^{\mu,\alpha}(\bPP{XY}) - (\alpha+1) \xi.
\end{align}
Suppose that $R + \mu D \le R_{\mathtt{WZ}}^{\mu}(\bPP{XY}) - 2\nu$ for some $\nu > 0$. 
The variational formula \eqref{eq:WZ-variational-form} implies that 
there exists sufficiently large $\alpha$ such that $R_{\mathtt{WZ}}^{\mu}(\bPP{XY}) \le R_{\mathtt{WZ}}^{\mu,\alpha}(\bPP{XY}) + \nu$.
Thus, if we take $\xi$ so that $\xi < \frac{\nu}{\alpha+1}$, then \eqref{eq:WZ-exponential-converse} is violated, which implies
that the excess distortion probability must be larger than $1-2^{-\xi n}$.
However, the above argument does not give an explicit lower bound for the
exponent of the convergence speed. Such an explicit bound has been
derived recently by Oohama for certain multiterminal problems
(cf.~\cite{Ooh15, Ooh16}).


\textchange{Another interesting problem, which we have not considered,
is that of the multiple access channel (MAC). The strong converse for
MAC was established in~\cite{Dueck81, Ahl82} using a technical tool called the
``wringing lemma.'' While we can recast the proof of~\cite{Dueck81, Ahl82} in
our change of measure language, but it does not offer any extra
insight. In particular, we cannot circumvent the wringing lemma and
simplify the proof of~\cite{Dueck81,Ahl82}; indeed, it is of interest to
simplify this opaque and technical proof.
}

\textchange{Finally, it is of interest to examine the applicability of our strong converse 
proof recipe to problems studied in other fields. One such instance was recently
demonstrated in \cite{TyaWat19} where this recipe was used to provide
an alternative proof for the multiprover 
nonsignaling parallel repetition theorem, an important result in theoretical computer 
science and physics.}

Even though we have illustrated the utility of our recipe only for
several representative problems, we believe that this recipe provides
strong converse theorems for any problems as long as single-letter
characterizations of the optimal rates under weak converse are
known. An interesting future direction will be an application of this
change-of-measure argument to problems such that single-letter
characterizations of weak converse are unknown. A partial attempt for
this problem was made in \cite{GuEff11} for centralized coding
problems.\footnote{Instances of strong converses for problems with
  unknown single-letter characterization of the optimal rate are
  available; see \cite{AhlCsi86, KosKli17}. Both these results apply
  the blowing-up lemma in a non-trivial manner.}  A research in such a
direction will establish a folklore in information theory: Strong
converse holds for any stationary memoryless system.

\appendix

\subsection{Proof of variational formaula \eqref{eq:WZ-variational-form}} \label{app:WZ}

The proof is almost the same as the proof of
Proposition~\ref{p:remove_Markov}.  Clearly, the left-side is greater 
than or equal to the right-side. To prove the other direction, for each $\alpha>0$, let
$\mathrm{P}^\alpha_{\tilde{U}\tilde{X}\tilde{Y}\tilde{Z}}$ be the 
minimizer for the inner minimum on the right-side, and let
$\mathrm{Q}^\alpha_{\tilde{U}XY\tilde{Z}} =
\mathrm{P}^\alpha_{\tilde{Z}|\tilde{U}\tilde{Y}}
\mathrm{P}^\alpha_{\tilde{U}|\tilde{X}} \bPP{XY}$ be the induced
distribution.  Since $G(\bPP{\tilde{U}\tilde{X}\tilde{Y}\tilde{Z}}) =
I(\tilde{U} \wedge \tilde{X}|\tilde{Y}) +
\mathbb{E}[d(\tilde{X},\tilde{Z})]$ is nonnegative and bounded above
by $a= \log |\cX| + D_{\max}$, it must hold that
$D(\mathrm{P}^\alpha_{\tilde{U}\tilde{X}\tilde{Y}\tilde{Z}} \|
\mathrm{Q}^\alpha_{\tilde{U}XY\tilde{Z}}) \le (a/\alpha)$.

Furthermore, since $G(\bPP{\tilde{U}\tilde{X}\tilde{Y}\tilde{Z}})$ is
uniformly continuous, there exists a function $\Delta(t)$ satisfying
$\Delta(t) \to 0$ as $t \to 0$ such that
\begin{align*}
R_{\mathtt{WZ}}^{\mu,\alpha}(\bPP{XY}) &\ge
G(\mathrm{P}^\alpha_{\tilde{U}\tilde{X}\tilde{Y}\tilde{Z}}) \\ &\ge
G(\mathrm{Q}^\alpha_{\tilde{U}XY\tilde{Z}}) - \Delta(a/\alpha) \\ &\ge
R_{\mathtt{WZ}}^\mu(\bPP{XY}) - \Delta(a/\alpha).
\end{align*}
Thus, we obtain the desired inequality by taking $\alpha \to \infty$,
which completes the proof. \qed

\subsection{Proof of variational formula \eqref{e:varitional_function}} \label{app:function-computation}

The proof mimics the one above, but has been included for
completeness. As before, it is easy to see that the left-side is greater than or
equal to the right-side. For the other direction, for each $\alpha>0$, let
$\mathrm{P}^\alpha_{\tilde{U}\tilde{V} \tilde{X}\tilde{Y}}$ 
be the minimizer for the inner minimum on the right-side, and let  
$\mathrm{Q}^\alpha_{\tilde{U}\tilde{V}XY} =
\mathrm{P}^\alpha_{\tilde{V}|\tilde{U}\tilde{Y}}
\mathrm{P}^\alpha_{\tilde{U}|\tilde{X}} \bPP{XY}$  
be the induced distribution. Since $G(\bPP{\tU\tV\tX\tY}) = I(\tU,\tV
\wedge \tX|\tY) + I(\tU,\tV \wedge \tY|\tX)$ 
is nonnegative and bounded above by $a= \log|\cX||\cY|$, it must hold 
that 
\begin{align} \label{eq:WZ-divergence-upper-bound}
D(\mathrm{P}^\alpha_{\tilde{U}\tilde{V} \tilde{X}\tilde{Y}} \| \mathrm{Q}^\alpha_{\tilde{U}\tilde{V}XY}) \le \frac{a}{\alpha}
\end{align}
and\footnote{We have
put the subscripts in \eqref{eq:almost-reproduction-conditions} to emphasize the underlying measure.}
\begin{align} \label{eq:almost-reproduction-conditions}
  H_{\mathrm{P}^\alpha}(\tilde{F}|\tY,\tU) \le \frac{a}{\alpha},
\qquad
  H_{\mathrm{P}^\alpha}(\tilde{F}|\tX,\tU,\tV) \le \frac{a}{\alpha}.
\end{align}

Using the compactness of the finite dimensional probability simplex,
there exists a subsequence
$\{\mathrm{Q}^{\alpha_i}_{\tilde{U}\tilde{V}XY} \}_{i=1}^\infty$  
of $\{ \mathrm{Q}^{\alpha}_{\tilde{U}\tilde{V}XY} \}_{\alpha \in \mathbb{N}}$ that converges to
$\mathrm{Q}^*_{\tilde{U}\tilde{V}XY}$. By uniform continuity of the
entropy, \eqref{eq:WZ-divergence-upper-bound} and
\eqref{eq:almost-reproduction-conditions}  
imply that the limit point $\mathrm{Q}^*_{\tilde{U}\tilde{V}XY}$
satisfies $H_{\mathrm{Q}^*}(F|Y,\tU)=H_{\mathrm{Q}^*}(F|X,\tU,\tV) =
0$. Furthermore, since  
$G(\bPP{\tU\tV\tX\tY})$ is also uniform continuous, there exists a
function $\Delta(t)$ satisfying $\Delta(t) \to 0$ as $t \to 0$ such
that 
\begin{align*}
R_f^{\alpha_i}(\bPP{XY}) &\ge G(\mathrm{P}^{\alpha_i}_{\tU\tV\tX\tY}) \\
&\ge G(\mathrm{Q}^{\alpha_i}_{\tU\tV XY}) - \Delta(a / \alpha_i).
\end{align*} 
Thus, by taking the limit $i\to\infty$, we have
\begin{align*}
\sup_{\alpha > 0} R_f^{\alpha}(\bPP{XY}) 
&\ge G(\mathrm{Q}^*_{\tU\tV XY}) \\
&\ge R_f(\bPP{XY}),
\end{align*}
which completes the proof. \qed

\subsection{Proof of variational formula \eqref{eq:CR-variational-formula}}

The proof is a minor variant of those of \eqref{eq:WZ-variational-form} and \eqref{e:varitional_function}. 
We only need to observe that the function $G(\bPP{\tU\tV\tX\tY}) =
I(\tU,\tV \wedge \tX,\tY) - \mu\big( I(\tU,\tV \wedge \tX|\tY) +
I(\tU,\tV \wedge \tY|\tX) \big)$ is bounded above by $a = \log
|\cX||\cY|$ and below by $b = -\mu \log |\cX||\cY|$. With this
observation, the same arguments go through. \qed 

\bibliography{IEEEabrv,references} \bibliographystyle{IEEEtranS}

\end{document}